\definecolor{mygray}{gray}{.9}
\newtheorem{Theorem}{Theorem}
\newtheorem{Lemma}{Lemma}
\newtheorem{Corollary}{Corollary}
\newtheorem{Definition}{Definition}
\newtheorem{Remark}{Remark}
\newtheorem{Assumption}{Assumption}
\begin{document}
%
% paper title
% Titles are generally capitalized except for words such as a, an, and, as,
% at, but, by, for, in, nor, of, on, or, the, to and up, which are usually
% not capitalized unless they are the first or last word of the title.
% Linebreaks \\ can be used within to get better formatting as desired.
% Do not put math or special symbols in the title.
\title{Social Profit Optimization with Demand Response Management in Electricity Market: A Multi-timescale Leader-following Approach}
%
%
% author names and IEEE memberships
% note positions of commas and nonbreaking spaces ( ~ ) LaTeX will not break
% a structure at a ~ so this keeps an author's name from being broken across
% two lines.
% use \thanks{} to gain access to the first footnote area
% a separate \thanks must be used for each paragraph as LaTeX2e's \thanks
% was not built to handle multiple paragraphs
%

\author{Jianzheng Wang, Yipeng Pang,
	Guoqiang~Hu,~\IEEEmembership{Senior Member,~IEEE} % <-this % stops a space
	\thanks{This work was supported in part by Singapore Economic Development Board under EIRP grant S14-1172-NRF EIRP-IHL, and in part by the Republic of Singapore s National Research Foundation under its Campus for Research Excellence and Technological Enterprise (CREATE) Programme through a grant to the Berkeley Education Alliance for Research in Singapore (BEARS) for the Singapore-Berkeley Building Efficiency and Sustainability in the Tropics (SinBerBEST) Program.}% <-this % stops a space
	\thanks{Jianzheng Wang, Yipeng Pang and Guoqiang Hu are with the School of Electrical and Electronic Engineering, Nanyang Technological University, Singapore, 639798 e-mail: (wang1151@e.ntu.edu.sg, ypang005@e.ntu.edu.sg, gqhu@ntu.edu.sg).}
% <-this % stops a space
}

\maketitle

% As a general rule, do not put math, special symbols or citations
% in the abstract or keywords.
\begin{abstract}
In the electricity market, it is quite common that the market participants make ``selfish'' strategies to harvest the maximum profits for themselves, which may cause the social benefit loss and impair the sustainability of the market in the long term. Regarding this issue, we will study how the social profit can be improved through strategic demand response management. Specifically, we explore two interaction mechanisms in the market: Nash game and Stackelberg game. At the user side, each user makes the respective energy-purchasing strategy to optimize its own profit. At the utility company side, we consider multiple self-centric utility companies who play games. A social-centric governmental utility company is established as the leader to optimize the social profit of the market through competitions. Then, a multi-timescale leader-following problem of the utility companies is formulated under the coordination of an independent system operator. By our proposed demand function amelioration strategy, the market efficiency is maximized. In addition, by considering some additional constraints of the market, two projection-based algorithms are proposed. The feasibility of the proposed algorithms is verified with an IEEE 9-bus system model in the simulation.
\end{abstract}

\begin{IEEEkeywords}
Electricity market; social profit optimization; Nash game; Stackelberg game; leader-following approach.
\end{IEEEkeywords}

\IEEEpeerreviewmaketitle
\section{Introduction}
%\IEEEPARstart
\subsection{Background and Motivation}
\IEEEPARstart{T}{he} report of ``Grid 2030'' reveals that there are around 119-188 billion dollars annual costs on other industrial areas due to power disturbances and power quality issues \cite{1}. To maintain the reliability of the power system, demand response (DR) management has been proved to be a successful solution and is widely implemented around the world \cite{2}.

Among the various DR research works, social-centric and self-centric profit optimization problems have drawn much attention in the recent few years. Social profit may be impaired in case each agent optimizes its own profit without considering the benefit of the society. To model such a ``selfish'' manner, Nash game problems are usually discussed. Alternatively, if we regard the whole market as a ``cooperative system'', the agents will cooperatively make strategies to optimize the social profit. This motivates us to investigate how to reduce the gap between the solutions to social-centric and self-centric problems such that the Nash equilibrium (NE) can be ``driven'' to the social optimal solution.
To this end, in this work, we focus on an electricity market model with multiple utility companies (UCs) and users, where each agent can make the respective strategy to optimize its own profit, and then develop some effective coordination strategies to improve the market efficiency.

\subsection{Literature Review}

Recent studies on DR can be categorized into two main areas: self-centric optimizations and social-centric optimizations, which can be further classified into unilateral, bilateral and multilateral optimizations depending on the roles of the participants involved \cite{ma2019energy}. Self-centric optimization problems cover a wide range of objectives, such as cost of UCs \cite{bahrami2015demand}, payment of users \cite{jacquot2018analysis}, payoff of aggregators \cite{parvania2013optimal}, peak load reduction \cite{pedrasa2009scheduling}, and recovery of investments \cite{nash}.
However, social-centric optimization problems usually consider the benefit of the whole market, where the objective functions can be the combinations or tradeoffs of those of all the participants. For example, the authors of \cite{knudsen2015dynamic} considered the cost characteristics of a batch of loads and generators. The optimal power flow is settled by solving a constrained social cost minimization problem at different market-clearing instants.
In \cite{c22}, the social cost function was designed as the total cost of load reduction scheduling, load shifting scheduling, energy storage devices, and on-set generators.
A real-time pricing algorithm was designed for an independent system operator (ISO) to optimize the social profit in \cite{26+1}. By formulating the dual problem, the optimal energy prices are settled as the optimal dual variables. In \cite{11}, DR was defined as the difference between the energy intended to buy and actually consumed. The optimal DR price is settled when the overall benefit of DR buyers and DR sellers is maximized.

Even though the existing works on social profit optimizations are fruitful, most of them require that the participants are cooperative or centrally controlled by some coordinators \cite{knudsen2015dynamic,c22,26+1,11}. The discussions on social profit optimizations with multilateral competitions among heterogeneous market participants (e.g., a mixed combination of UCs and users) are still limited. Compared with cooperative or centralized optimizations, the multilateral competition emphasizes the selfish instinct of the agents with heterogeneous objectives. To model the competitive behaviour of agents, game theoretic methods have been widely studied recently \cite{ye2016game,myerson2013game,masoumzadeh2016long,wang2015game,wang2019noncooperative,zhang2019event}. In game problems, each agent makes the best strategy for itself by observing the strategy of its rivals. Therefore, it is of great significance to establish certain coordinator to influence the competition result such that the social profit is optimized \cite{chen2017operating}.
With this motivation, there are two questions that arise here: (i) what's the equilibrium (if exists) of the multilateral competitions in the market? (ii) how to make an effective coordination strategy to optimize the social profit?

To address the above two questions, the concept of market efficiency is employed by this work, which can characterize the gap between the optimized social profit and the ideal maximum social profit. Existing studies on market (system) efficiency can be found in various fields. For instance, a Stackelberg routing problem was discussed in \cite{korilis1997achieving}, where the maximal efficiency can be achieved when the demand of the leader is higher than certain threshold. The author of \cite{roughgarden2004stackelberg} studied a job scheduling problem with a set of machines to minimize the total latency of the system. It shows that the total latency is $1/\epsilon$ times that of the centralized scheduling scenario, where $\epsilon$ is the proportion of the leader machines.
A decentralized electric vehicle scheduling problem was considered in \cite{chakraborty2013flexible}. It shows that if the price of scheduling is proportional to the energy consumption of users, the social benefit loss of the users is 25$\%$ at most compared with the centralized optimization scenario.

The contributions of this work are summarized as follows.
\begin{enumerate}
  \item We propose a multi-UC-multi-user electricity market model based on NE and Stackelberg equilibrium (SE) analysis. At the user side, each user makes the optimal energy-purchasing strategy for itself. At the UC side, some self-centric UCs play games to optimize their own profits. The optimal solutions of users and UCs are analytically derived.
  \item To optimize the social profit, we first propose a basic multi-timescale leader-following optimization scheme by considering a social-centric governmental UC (g-UC, leader) in the market model. Then, to maximize the market efficiency, a demand function amelioration (DFA) strategy is proposed and a multi-timescale leader-following problem with varying leader-following sensitivities is formulated. In addition, we propose two projected updating algorithms for the UCs by considering some additional constraints of the market.
  \item Compared with the existing works on market (system) efficiency, the proposed optimization strategies have the following features. {{(i)}} Different from the leader-following approaches discussed in \cite{korilis1997achieving,roughgarden2004stackelberg}, only one leader is established, i.e., we do not enforce the leader's influence by increasing the proportion/number of the leaders. {{(ii)}} Different from financial incentive based strategies as studied in \cite{chakraborty2013flexible}, we do not set incentive budget for the coordinator. {{(iii)}} By the DFA strategy, the UCs do not know the real demand function of users, which overcomes the privacy releasing issue of conventional Stackelberg game approaches.
\end{enumerate}

The rest of this paper is organized as follows. In Section II, we introduce a Nash-Stackelberg game based electricity market model. The optimal strategies of users and UCs are analytically derived. Section III presents our proposed social profit optimization algorithms based on multi-timescale leader-following approach. In addition, two projected updating algorithms are provided by considering additional constraints of the market. The effectiveness of the proposed algorithms is verified by using an IEEE 9-bus system model in Section IV. Section V concludes this paper.

%--------------------------------------------------------------------------

%--------------------------------------------------------------------------

\section{Problem Formulation}

{In this section, we first present the mathematical model of users and UCs. Then, a Nash-Stackelberg game is formulated for the market.}

We consider an electricity market model composed of user set $\mathcal{M} = \{ 1,2,...,M\} $
%sub-group (SG) set $\mathcal{R}= (1,...ik,...MN)$,
and UC set $\mathcal{N} = \{1,2,...,N\}$, $M \geq 2$, $N \geq 2$.
%SG $ik$ is the $k$-th group member of user group $i$ who trades with UC $k$. So the realization of optimal profit of user $i$ requires the cooperation of SGs.
Let $d_{i,k}$ be the energy-purchasing strategy of user $i$ with UC $k$, and let $p_k$ be the pricing strategy of UC $k$.
In this market, users send their demands to the ISO. Then, the ISO clears the market with UCs based on some coordination strategies.

\subsection{User Side Modeling}
The profit function of user $i$ is defined as the total benefit by consuming energy $d_{i,k}$, which is
\begin{align}\label{21s}
& U_{i} (\mathbf{d}_{i}) = \nonumber \\
& \left\{\begin{array}{ll}
              \sum_{k \in \mathcal{N}} (\alpha d_{i,k}-\dfrac{\beta}{2} d^2_{i,k}-p_k d_{i,k} ), & 0 \leq d_{i,k} \leq \dfrac{\alpha}{\beta}, \\
              \sum_{k \in \mathcal{N}} (\dfrac{\alpha^2}{2\beta}-\dfrac{\alpha p_k}{\beta} ), &   d_{i,k} > \dfrac{\alpha}{\beta},
            \end{array}\right.
\end{align}
where $\mathbf{d}_i=(d_{i,1},...,d_{i,N})^{\top}$ and $\alpha,\beta>0$. In the brackets of the upper formula in (\ref{21s}), the first two terms characterize the benefit by consuming energy $d_{i,k}$ \cite{26+1} and the third term is the payment to UC $k$. The lower formula implies that the benefit of users will be constant when the consumed energy is over certain threshold. Let $y_{i}$ be the total energy demand of user $i$. Assume that $y_i  \leq \alpha/\beta$, then we only need to consider the upper formula in (\ref{21s}).
In this case, the marginal profit of user $i$ decreases with $d_{i,k}$ increasing, which is consistent with the characteristics of many benefit functions \cite{26+2}.

{\begin{Remark}
In this work, we consider a fixed total demand $y_i$ for user $i$ during certain optimization interval, which is also studied in many existing works \cite{bai2017distributed,vlachos2013demand}. By contrast, as we will see later, the optimal demand satisfied by different UCs is elastic and responsive to different price levels. This settlement realizes an optimal tradeoff among different UCs and reflects a smart decision-making process of users.
\end{Remark}}
Then the profit optimization problem of user $i$ can be formulated as
\begin{align}\label{con}
{\mathbf{(P1)}}  \quad \max \limits \quad & U_{i} (\mathbf{d}_i) \nonumber \\
\hbox{subject to} \quad & \sum_{k \in \mathcal{N}} d_{i,k} = y_i.
\end{align}

\begin{Theorem}\label{conx}
The optimal solution to {Problem (P1)} is $\mathbf{d}^*_{i} = ({d}^*_{i,1},...,{d}^*_{i,N})^{\top}$, where
\begin{equation}\label{con1}
d^*_{i,k}=\frac{1}{\beta}(\frac{1}{N}\sum_{j \in \mathcal{N}}p_j-p_k )+\frac{y_{i}}{N}, \quad i \in \mathcal{M}, k \in \mathcal{N}.
\end{equation}
\end{Theorem}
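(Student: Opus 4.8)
The plan is to recognize \textbf{(P1)} as a concave maximization with a single linear equality constraint and to solve it in closed form by the method of Lagrange multipliers. First I would record concavity of the objective: on the region $0\le d_{i,k}\le \alpha/\beta$ the map $\bm d_i\mapsto U_i(\bm d_i)$ is a separable quadratic with Hessian $-\beta I_N$, which by Lemma~\ref{llp} (a diagonal, hence symmetric diagonally dominant, matrix with non-positive diagonal entries) is negative semi-definite, so Lemma~\ref{lb} yields concavity of $U_i$. Since the feasible set $\{\bm d_i:\sum_{k\in\mathcal N}d_{i,k}=y_i\}$ is convex, \textbf{(P1)} is a concave program; a point satisfying the first-order stationarity conditions of its Lagrangian is therefore a global maximizer, and strict concavity in the feasible directions makes it the unique one.

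Next I would introduce the Lagrangian $L(\bm d_i,\lambda)=\sum_{k\in\mathcal N}\big(\alpha d_{i,k}-\tfrac{\beta}{2}d_{i,k}^2-p_k d_{i,k}\big)+\lambda\big(y_i-\sum_{k\in\mathcal N}d_{i,k}\big)$ and set $\partial L/\partial d_{i,k}=0$ for each $k$, which gives $d_{i,k}=(\alpha-p_k-\lambda)/\beta$. Substituting this into the constraint $\sum_{k\in\mathcal N}d_{i,k}=y_i$ and solving the resulting scalar equation for the multiplier produces $\lambda=\alpha-\tfrac1N\sum_{j\in\mathcal N}p_j-\tfrac{\beta y_i}{N}$. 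Plugging $\lambda$ back into the expression for $d_{i,k}$ cancels the $\alpha$ terms and yields exactly \eqref{con1}, namely $d^*_{i,k}=\tfrac1\beta\big(\tfrac1N\sum_{j\in\mathcal N}p_j-p_k\big)+\tfrac{y_i}{N}$.

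The part that needs care — and the only real obstacle — is to justify that this stationary point indeed solves \textbf{(P1)} for the \emph{piecewise} objective in \eqref{21s}: one must check that each $d^*_{i,k}$ lies in the interval $[0,\alpha/\beta]$ where the smooth (upper) branch of $U_i$ is active, so that no saturation constraint binds and the above Lagrange conditions are the correct optimality conditions. I would handle the upper bound using the standing assumption $y_i\le\alpha/\beta$ together with $\tfrac1N\sum_{j\in\mathcal N}p_j-p_k\le \tfrac1N\sum_{j\in\mathcal N}p_j$, and would invoke the market's implicit pricing assumptions (prices not too dispersed, so that $\tfrac1\beta(\tfrac1N\sum_{j\in\mathcal N}p_j-p_k)+\tfrac{y_i}{N}\ge 0$) for the lower bound; if nonnegativity of $d_{i,k}$ is not explicitly imposed in \textbf{(P1)}, the interior stationarity argument already suffices and \eqref{con1} is the answer outright. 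Finally I would note that existence and uniqueness of the maximizer follow from concavity together with the nonempty feasible set, which closes the proof.
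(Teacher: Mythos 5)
Your proof is correct, but it takes a different route from the paper's. The paper eliminates the equality constraint by substitution: it picks a reference component $d_{i,\hat k}=y_i-\sum_{k\neq\hat k}d_{i,k}$, reduces \textbf{(P1)} to an unconstrained concave quadratic in the remaining $N-1$ variables, writes the stationarity condition as the linear system $\bm{P}\bm d_{i,-\hat k}=\bm b_{i,-\hat k}$ with $\bm P=\beta(\bm I_{N-1}+\bm 1_{N-1}\bm 1_{N-1}^T)$, and inverts $\bm P$ to obtain \eqref{con1}, checking afterwards that the eliminated component obeys the same formula. You instead keep all $N$ variables and use a Lagrange multiplier for the single equality constraint, which reduces the whole computation to one scalar equation for $\lambda$ and makes the symmetry of the answer in $k$ immediate --- arguably cleaner than inverting $\bm I+\bm 1\bm 1^T$, and it avoids the paper's final step of verifying that the reference component also fits the uniform formula. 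Both arguments rest on the same concavity observation to promote the stationary point to a global maximizer. Your closing caveat about the piecewise definition of $U_i$ is handled the same way the paper handles it (the standing assumption $y_i\le\alpha/\beta$ restricts attention to the upper branch, and nonnegativity of $d^*_{i,k}$ is not imposed in \textbf{(P1)} itself but deferred to the rationality condition of Corollary~\ref{nb}); just note that your inequality $\tfrac1N\sum_j p_j-p_k\le\tfrac1N\sum_j p_j$ presumes $p_k\ge 0$, which the paper explicitly does not assume, so it is safer to lean on the ``interior stationarity suffices'' branch of your argument, as you already indicate.
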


\begin{proof} See {Appendix \ref{conx1}}.
\end{proof}

\begin{Corollary}
Based on the optimal strategy $\mathbf{d}^*_{i}$, the profit of user $i$ can be obtained by (\ref{21s}), which is
\begin{align}\label{u1}
  U^*_{i} (\mathbf{p}) = & \sum_{k\in \mathcal{N}}((\alpha - p_k)(\frac{1}{N\beta}\sum_{j \in \mathcal{N}}p_j-\frac{p_k}{\beta}+\frac{y_{i}}{N}) \nonumber \\
  & -\frac{\beta}{2}(\frac{1}{N\beta}\sum_{j \in \mathcal{N}}p_j-\frac{p_k}{\beta}+\frac{y_{i}}{N})^2).
\end{align}
The sales quantity of UC $k$ is
\begin{equation}\label{con2}
d_{k} = \sum_{i \in \mathcal{M}} d^*_{i,k} = \frac{M}{\beta}(\frac{1}{N}\sum_{j \in \mathcal{N}}p_j-p_k )+\frac{Y}{N},
\end{equation}
where $Y=\sum_{i \in \mathcal{M}}y_i$.
\end{Corollary}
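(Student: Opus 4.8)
The plan is to obtain both identities by direct substitution of the optimal energy-purchasing strategy from Theorem~\ref{conx} into the primitives already defined in Section~\ref{pre}; no new optimization argument is needed, so this is essentially bookkeeping.

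For the profit identity, I would start from the upper branch of~(\ref{21s}), which is the relevant one under the standing assumption $y_i \le \alpha/\beta$. Writing each summand as $(\alpha - p_k)d_{i,k} - \tfrac{\beta}{2}d_{i,k}^2$ and inserting $d^*_{i,k} = \tfrac{1}{N\beta}\sum_{j\in\mathcal{N}}p_j - \tfrac{p_k}{\beta} + \tfrac{y_i}{N}$ from~(\ref{con1}) reproduces~(\ref{u1}) term by term. The one point deserving a line of justification is that $d^*_{i,k}$ indeed lies in $[0,\alpha/\beta]$, so that the quadratic branch is the one in force at the optimum; this follows from $\sum_{k\in\mathcal{N}}d^*_{i,k}=y_i\le\alpha/\beta$ together with the pricing strategies lying in their admissible ranges, and it can be noted briefly rather than re-derived. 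It is also cleaner to leave~(\ref{u1}) in the un-expanded quadratic form shown, instead of collecting powers of $p_k$.

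For the sales-quantity identity, I would sum $d^*_{i,k}$ over $i\in\mathcal{M}$. Since the bracketed term $\tfrac{1}{\beta}(\tfrac{1}{N}\sum_{j\in\mathcal{N}}p_j-p_k)$ is independent of $i$, summation multiplies it by $M=|\mathcal{M}|$, while $\sum_{i\in\mathcal{M}}\tfrac{y_i}{N}=\tfrac{1}{N}\sum_{i\in\mathcal{M}}y_i=\tfrac{Y_k}{N}$ by the definition $Y_k=\sum_{i\in\mathcal{M}}y_i$, which yields~(\ref{con2}).

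I do not anticipate a genuine obstacle here: the derivation is routine algebra once Theorem~\ref{conx} is granted. The only mild subtlety is confirming feasibility of $\bm{d}^*_i$ so that the quadratic branch of $U_i$ applies, and keeping the substitution in~(\ref{u1}) organized.
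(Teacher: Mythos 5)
Your proposal is correct and matches the paper's own (one-line) proof: formula (\ref{u1}) is obtained by plugging (\ref{con1}) into the upper branch of (\ref{21s}), and (\ref{con2}) follows from $d_k=\sum_{i\in\mathcal{M}}d^*_{i,k}$. Your added remark on checking that $d^*_{i,k}$ stays in the branch where the quadratic form of $U_i$ applies is a reasonable extra care the paper omits (it defers nonnegativity to Corollary~3), but it does not change the argument.
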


{\begin{Remark}\label{323}
In (\ref{21s}), the total profit of user $i$ consists of $N$ componential profit functions. Similar models can be found in \cite{21,11,wang2019social}. This model breaks the conventional impression that users only purchase the energy from the UC who provides the lowest price, which conflicts with many multi-price markets. In the real markets, the users may acquire energy from different UCs with different prices due to various reasons, e.g., price, quality of service, reputation, etc. Therefore, this model allows the UCs with different prices to co-exist and each of them can harvest certain benefit. As explained by (\ref{con1}), the energy quantity purchased from UC $k$ will be lower with a relatively higher price since the sign of $p_k$ is negative, and vice versa.
\end{Remark}}

\subsection{Utility Company Side Modeling}\label{ac}

The profit function of UC $k$ is designed as
\begin{equation}\label{gge}
W_k (p_k,d_k)=p_k d_{k}-D_k(d_k),
\end{equation}
where $p_k d_{k}$ represents the revenue of sales quantity $d_k$ and $D_k(d_k) = a_k d^2_k +b_k d_k +c_k$ is the generation cost, $a_k>0, b_k\geq 0, c_k\geq 0$.
By plugging (\ref{con2}) into (\ref{gge}), the profit function of UC $k$ can be obtained as
\begin{align}\label{81}
W_k (p_k,\mathbf{p}_{-k})=A_k p^2_k+B_k p_k+C_k,
\end{align}
where
\begin{subequations}\label{e2}
\begin{align}
A_k = & -\dfrac{a_kM^2}{N^2}(\dfrac{N-1}{\beta})^2-\dfrac{M(N-1)}{N\beta}, \label{7a} \\
B_k= & \dfrac{2a_kM(N-1)}{N^2\beta}(\dfrac{M\sum_{j \in \mathcal{N}\setminus{\{k\}}}p_j}{\beta}+Y ) \nonumber \\
&+\dfrac{b_kM(N-1)}{N\beta}+\dfrac{1}{N}(\dfrac{M\sum_{j \in \mathcal{N}\setminus{\{k\}}}p_j}{\beta}+Y), \label{7b} \\
C_k= & -\dfrac{a_k}{N^2}(\dfrac{M\sum_{j \in \mathcal{N}\setminus{\{k\}}}p_j}{\beta}+Y)^2 \nonumber \\
& -\dfrac{b_k}{N}(\dfrac{M\sum_{j \in \mathcal{N}\setminus{\{k\}}}p_j}{\beta}+Y)-c_k,
\end{align}
\end{subequations}
and $\mathbf{p}_{-k}=(p_1,...,p_{k-1},p_{k+1},...,p_N)^{\top}$. Then, the optimization problem of UCs can be given by
\begin{align}\label{gene}
{\mathbf{(P2)}}  \quad &\max\limits_{p_k}  \quad W_k({p_k},\mathbf{p}_{-k}), \quad \forall k \in \mathcal{N}.
\end{align}
\begin{Remark}\label{r1}
In {Problem (P2)}, we tentatively do not set any lower bound for $p_k$ since the market may encourage the users to consume energy by using negative prices with some economic purposes, which has been applied in some practical electricity markets \cite{fanone2013case}. The optimization problems with some additional constraints will be discussed in Section \ref{c}.
\end{Remark}
To illuminate the optimization problem of UCs and users, we employ the following two definitions.
\begin{Definition}\label{sta2}({Stackelberg Equilibrium }\cite{bacsar1998dynamic})
Let UC $k \in \mathcal{N}$ be the Stackelberg leader with objective function $W_k(p_k,{d}_{k}(p_k,\mathbf{p}_{-k}))$ and the users be the followers. Strategy $p^{\mathrm{SE}}_k$ is named as an SE for UC $k$ if
\begin{equation}\label{}
\begin{split}
W_k(p^{\mathrm{SE}}_k, {d}_{k}(p^{\mathrm{SE}}_k,\mathbf{p}_{-k}))
=&\max \limits_{p_k}  W_k(p_k, {d}_{k}(p_k,\mathbf{p}_{-k})). \nonumber
\end{split}
\end{equation}
\end{Definition}

\begin{Definition}\label{nas}({Nash Equilibrium }\cite{bacsar1998dynamic})
$\mathbf{p}^{\mathrm{NE}}=({p}_1^{\mathrm{NE}},...,{p}_N^{\mathrm{NE}})^{\top}$ is named as an NE of UCs if $W_k(p_k,\mathbf{p}^{\mathrm{NE}}_{-k}) \leq W_k(p^{\mathrm{NE}}_k,\mathbf{p}^{\mathrm{NE}}_{-k})$ holds, $\forall k \in \mathcal{N}$.
\end{Definition}

As seen from (\ref{gge}) and (\ref{81}), the objective function of UC $k$ contains the pricing strategy of other UCs and the optimal strategy of users, which complies with Definitions \ref{sta2} and \ref{nas}. In {Problem (P2)}, $W_k(p_k,\mathbf{p}_{-k})$ is concave at $p_k$. Hence, the optimal strategy of UC $k$ can be obtained by solving the first-order optimality condition $\nabla_{p_k} W_k(p_k,\mathbf{p}_{-k}) =0$, which gives
\begin{equation}\label{332}
2A_k p_k + B_k = 0.
\end{equation}
By writing (\ref{332}) in a compact form over set $\mathcal{N}$, we have
\begin{align} \label{lp3}
&  \mathbf{A}\mathbf{p}=\mathbf{q},
\end{align}
where
\begin{subequations}
\begin{align}
& \mathbf{A}=[s_{kk'}]_{k,k' \in \mathcal{N}},\\
& s_{kk}=2a_kM^2(N-1)^2+2MN\beta (N-1),  \label{lp66} \\
& s_{kj}=-(2a_kM^2(N-1)+MN\beta),  \label{lp67}\\
& \mathbf{p}=(p,...,p_k,...,p_N)^{\top}, \\
& \mathbf{q}=(q_1,...,q_k,...,q_N)^{\top}, \\
&q_k=\beta Y(N\beta+2a_kM(N-1))+N\beta M b_k (N-1).
\end{align}
\end{subequations}
By (\ref{lp66}) and (\ref{lp67}), it can be verified that $\mid s_{kk} \mid -(N-1)\mid s_{kj} \mid =MN\beta (N-1)>0$, which means $\mathbf{A}$ is strictly diagonally dominant and invertible \cite[Thm. 6.1.10]{ra}. Therefore, the solution to (\ref{lp3}) (i.e., NE of UCs) exists and is unique, which is
\begin{equation}\label{16}
  \mathbf{p}^{\mathrm{NE}}=\mathbf{A}^{-1}\mathbf{q}.
\end{equation}

\begin{Corollary}\label{nb}
({Rationality of $\mathbf{d}^*_i$}) Given the $\mathbf{p}^{\mathrm{NE}}$, $\mathbf{d}^*_{i}$ solved by (\ref{con1}) is said rational if $\mathbf{d}^*_{i} \geq \mathbf{0}$. A closed-loop condition of the rationality of $\mathbf{d}^*_{i}$ is
\begin{equation}\label{14}
\beta \mathbf{y}_i + \mathbf{B}\mathbf{A}^{-1}\mathbf{q} \geq \mathbf{0},
\end{equation}
where $\mathbf{B} = - N \mathbf{I}_N + \mathbf{1}_N \mathbf{1}_N ^{\top}$ and $\mathbf{y}_i=(y_i,...,y_i)^{\top}\in \mathbb{R}^N$.
\end{Corollary}

\begin{proof}
See {Appendix \ref{c3p}}.
\end{proof}

{\begin{Remark}
The condition provided by Corollary 2 ensures a positive optimal energy demand $d^*_{i,k}$ in (\ref{con1}). (\ref{14}) can be satisfied if demand $y_i$ is larger than or equal to the largest element in vector $-\mathbf{B}\mathbf{A}^{-1}\mathbf{q}/ \beta$. By contrast, $d^*_{i,k}<0$ means user $i$ sales energy of quantity $|d^*_{i,k}|$ to UC $k$ with price $p_k$, which is out of the scope of our discussion. A general solution for non-negative energy demands will be discussed in Section \ref{c}.
\end{Remark}}
Based on the previous discussion, the proposed Nash-Stackelberg game based energy trading mechanism is illustrated in Fig. \ref{11}. The optimization procedure of UCs and users is summarized in {Algorithm 1}.
\begin{figure}
  \centering
  % Requires \usepackage{graphicx}
  \includegraphics[width=10cm]{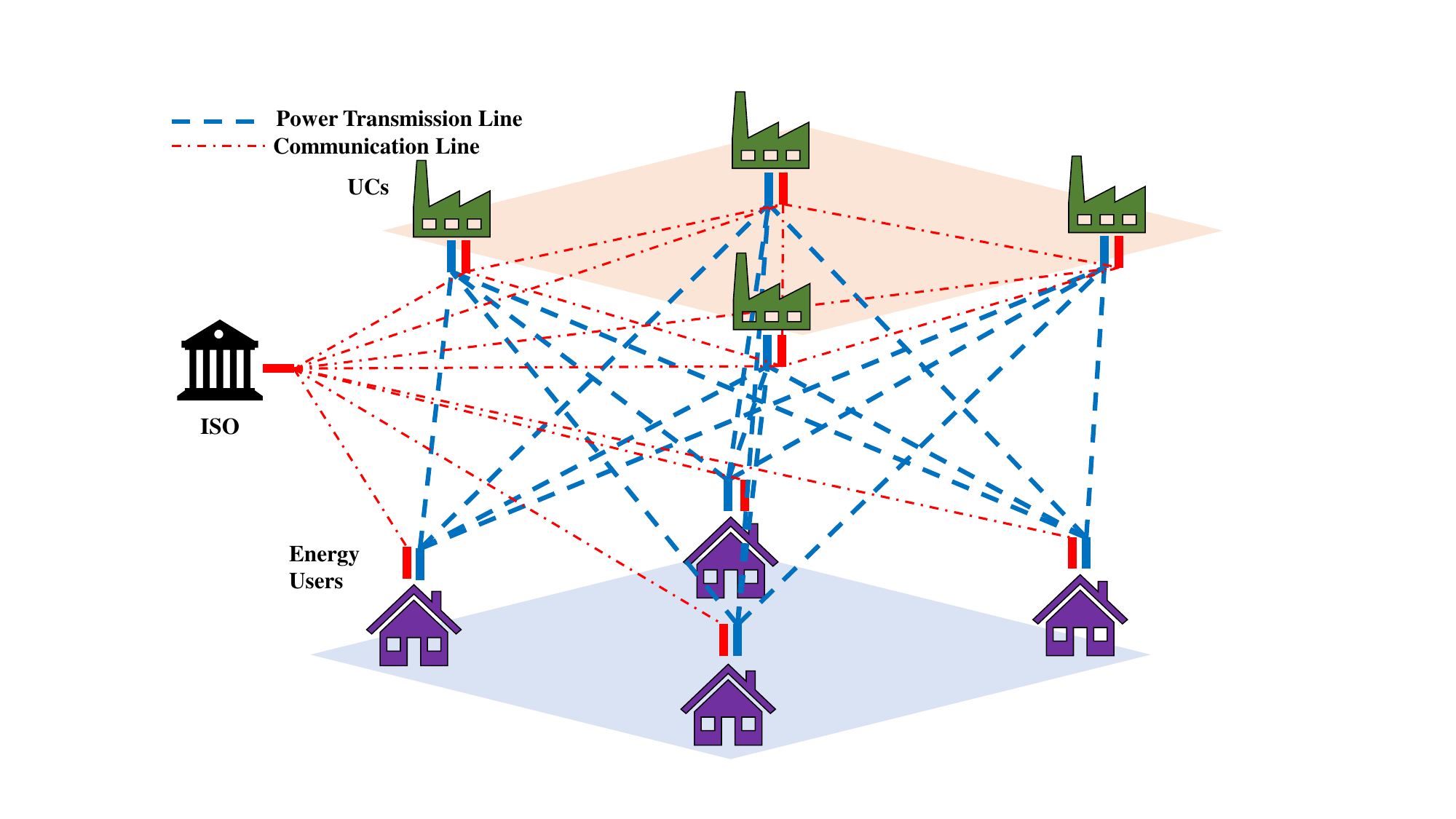}\\
  \caption{Energy trading mechanism of the proposed market by Nash-Stackelberg game.} \label{11}
\end{figure}

\begin{algorithm}[H]
\caption{Nash-Stackelberg game among UCs and users}
\begin{algorithmic}[1]
\State Determine $M$, $N$, $\alpha$, $\beta$, $y_i$, $a_k$, $b_k$, $c_k$, $i \in \mathcal{M}$, $k \in \mathcal{N}$;
\State Solve Nash equilibrium $\mathbf{p}^{\mathrm{NE}}$ by (\ref{16});
\State Solve the optimal strategy of users by (\ref{con1}) with $\mathbf{p}^{\mathrm{NE}}$ obtained in Step 2.
\end{algorithmic}
\end{algorithm}

\subsection{Social Profit Function}

In the proposed model, if we view all the participants as an aggregation, the social profit function can be defined by $S(\mathbf{p})=\sum_{i \in \mathcal{M}}U^*_{i}(\mathbf{p})+\sum_{k \in \mathcal{N}}W_k(\mathbf{p})$. Then, the social profit optimization problem can be formulated as\footnote[1]{In Problem (P3), we do not consider the constraint on $\mathbf{p}$ tentatively, as explained in Remark \ref{r1}.}
\begin{align}\label{xww}
{\mathbf{(P3)}} \quad &\max\limits_{\mathbf{p}} \quad  S (\mathbf{p}).
\end{align}
However, in some practical cases as discussed in the Introduction section, UCs may choose some self-centric optimal strategies, e.g., $\mathbf{p}^{\mathrm{NE}}$, rather than the optimal solution to {Problem (P3)}. Hereby, to characterize the market efficiency, we utilize the concept of price of anarchy (PoA), which is defined by \cite{poaa}
\begin{equation}\label{poa}
    \mathrm{PoA}= \sup_{\mathbf{p}^{\mathrm{NE}}} \frac{S(\mathbf{p}^{\diamond})}{S(\mathbf{p}^{\mathrm{NE}})},
\end{equation}
where $\mathbf{p}^{\diamond}$ denotes the optimal solution to {Problem (P3)}. Therefore, PoA characterizes the largest profit loss of all possible social profits at NE compared with the ideal maximum social profit. {By (\ref{16}), the NE is unique, which means the market efficiency loss can be characterized by the gap between the solutions to {Problems (P2)} and {(P3)}.}

\begin{Assumption}\label{as1}
Assume that $2(a_k+a_j)M(N-1)+N\beta-2 M\sum_{h \in \mathcal{N} \setminus \{j,k\}} a_h>0 $, $\forall k,j \in \mathcal{N}$, $k \neq j$.
\end{Assumption}
Since $N \geq 2$, {Assumption \ref{as1}} holds if $a_k$ is sufficiently close to each other or $N\beta$ is large enough, $k \in \mathcal{N}$.

\begin{Lemma}\label{ov}
Suppose that {Assumption \ref{as1}} holds. Then, $S(\mathbf{p})$ is concave.
\end{Lemma}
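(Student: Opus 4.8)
The plan is to compute the Hessian of $S(\bm{p})$ explicitly and show it is negative semi-definite, then invoke \textbf{Lemma \ref{lb}}. First I would assemble $S(\bm{p}) = \sum_{i \in \mathcal{M}} U_i^*(\bm{p}) + \sum_{k \in \mathcal{N}} W_k(\bm{p})$ as a quadratic form in $\bm{p}$, using the closed-form expressions \eqref{u1} for $U_i^*$ and \eqref{e2} for $W_k$. Since each summand is quadratic in $\bm{p}$, so is $S$, and its Hessian $\bm{H} = \nabla^2 S(\bm{p})$ is a constant symmetric $N \times N$ matrix; concavity is then equivalent to $\bm{H} \preceq \bm{0}$ by \textbf{Lemma \ref{lb}}.

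Next I would identify the entries of $\bm{H}$. The diagonal entry $H_{kk}$ collects: the $-\beta$-type curvature from the $\sum_i U_i^*$ terms (each $U_i^*$ contributes a term quadratic in $p_k$ with negative leading coefficient, scaled by $M$), the coefficient $2A_k$ from $W_k$ (recall $A_k < 0$ by \eqref{7a}), and the curvature coming from the $c_k$-style quadratic tails $C_j$ of the \emph{other} companies $W_j$, $j \neq k$, since each $C_j$ contains $(M\sum_{h \neq j} p_h/\beta + Y_j)^2$ which depends on $p_k$. The off-diagonal entry $H_{kj}$ similarly collects the bilinear cross-terms in $p_k p_j$ coming from $U_i^*$, from $B_k p_k$ (which contains $\sum_{h \neq k} p_h$), from $B_j p_j$, and from the $C_h$ terms. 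After this bookkeeping one expects $\bm{H}$ to have the structure $H_{kk} = -(\text{positive depending on } a_k)$ and $H_{kj}$ of a sign that makes row-wise diagonal dominance provable. The goal is to show $|H_{kk}| \ge \sum_{j \neq k} |H_{kj}|$ for each $k$, which by \textbf{Lemma \ref{llp}} (symmetric diagonally dominant with non-positive diagonal $\Rightarrow$ negative semi-definite) yields the claim. Here is exactly where \textbf{Assumption \ref{as1}} should enter: the quantity $2(a_k+a_j)M(N-1) + N\beta - 2M\sum_{h \in \mathcal{N}\setminus\{j,k\}} a_h$ is precisely the type of expression that arises when comparing a diagonal term (carrying $a_k$, $a_j$, and $N\beta$) against the accumulated off-diagonal contributions (carrying the $a_h$ for the remaining companies with a minus sign). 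So I would carry out the algebra until the diagonal-dominance inequality reduces, term by term, to Assumption \ref{as1} holding for all pairs $k \neq j$, possibly after summing the pairwise inequalities over $j$.

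The main obstacle I anticipate is the combinatorial bookkeeping of the cross-company $C_j$ contributions: because $C_j$ depends on $\sum_{h \neq j} p_h$, every $C_j$ with $j \notin \{k\}$ feeds into $H_{kk}$, and every $C_h$ with $h \notin \{k,j\}$ feeds into $H_{kj}$, so the $a_h$ terms proliferate and must be tracked with care to land on the exact form in Assumption \ref{as1}. A secondary subtlety is making sure the $U_i^*$ contributions (summed over all $M$ users, each with the same $\alpha,\beta$) are aggregated correctly — the $\alpha$-linear terms drop out of the Hessian, but the quadratic part of $U_i^*$ in \eqref{u1} has both a direct $p_k^2$ piece and cross pieces through the $\tfrac{1}{N\beta}\sum_j p_j$ factor, so these must be expanded fully. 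If direct diagonal dominance turns out to be slightly too weak, a fallback is to write $\bm{H} = -\beta$-part $+$ (generation-cost part) and bound each separately, or to exhibit $\bm{H}$ as a non-positive combination of $\bm{I}_N$ and $\bm{1}_N\bm{1}_N^T$ plus a diagonally dominant remainder; but I expect the straightforward route — expand, collect, compare against Assumption \ref{as1}, apply Lemmas \ref{llp} and \ref{lb} — to close the argument.
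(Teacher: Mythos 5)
Your proposal follows essentially the same route as the paper's proof: compute the constant Hessian $\bm{H}_S$ of the quadratic function $S$, establish row-wise diagonal dominance with non-positive diagonal, and conclude via Lemmas \ref{llp} and \ref{lb}. The only refinement worth noting is the precise mechanism by which \textbf{Assumption \ref{as1}} enters: the row sums of $\bm{H}_S$ turn out to be identically zero, and the assumption is exactly (up to the positive factor $M/(N^2\beta^2)$) the condition $\nabla_{kj}S>0$, which is what lets you replace $\sum_{j\neq k}\nabla_{kj}S$ by $\sum_{j\neq k}\lvert\nabla_{kj}S\rvert$ and thereby obtain the dominance relation (with equality) rather than a strict inequality to be verified pair by pair.
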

\begin{proof} See {Appendix \ref{ov1}}.
\end{proof}

{\begin{Remark}
As seen in Lemma \ref{ov}, Assumption \ref{as1} is used to ensure a concave social profit function, as often discussed in social profit optimization problems \cite{knudsen2015dynamic,samadi2010optimal}.
\end{Remark}}

\section{Social Profit Optimization Strategy}\label{spm}

In this section, we will optimize the social profit of the market by two methods: basic leader-following method and DFA based leader-following method, and further discuss two projection based algorithms by considering additional constraints of the market.

\begin{figure}
  \centering
  % Requires \usepackage{graphicx}
  \includegraphics[width=8cm]{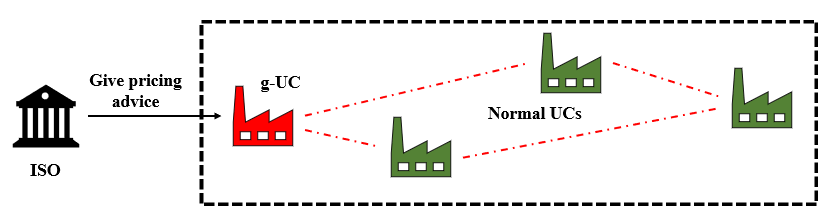}\\
  {\caption{Leader-based UC community with the coordination of ISO (the user level is removed for simplicity).}\label{12}}
\end{figure}

\subsection{Basic Multi-timescale Leader-following Framework}\label{eb}

{To optimize the social profit, we propose a social-centric g-UC (i.e., leader) in the market with the rest UCs being normal ones (i.e., followers). The examples of social-centric g-UCs can be referred to in \cite{agudelo2020drivers}. In this model, the pricing strategy of g-UC can be assigned by the ISO. The proposed leader-based UC community is illustrated in Fig. \ref{12}.

\subsubsection{Leader-following Problem} Define the pricing strategy of g-UC and other normal UCs as $p_1$ (i.e., the index of g-UC is $1$) and $p_k$, respectively, $k \in \widetilde{\mathcal{N}} \triangleq \mathcal{N} \setminus \{1\}$. Note that in NE scenario, the strategy of UC $k \in \widetilde{\mathcal{N}}$ is influenced by the strategy of UC $j \in \mathcal{N}\setminus \{k\}$. In the following, we will characterize the NE and SE of normal UCs with certain given $p_1$ by using the following two definitions (with a slight abuse of notation, we still use $p^{\mathrm{SE}}_k$ and $p^{\mathrm{NE}}_k$ to represent SE and NE in the new games, respectively).
\begin{Definition}\label{sta2-1}
(Leader-based SE) Let UC $k \in \widetilde{\mathcal{N}}$ be the Stackelberg leader with objective function $W_k(p_k,{d}_{k}(p_k,\mathbf{p}_{-k}))$ and the users be the followers. Strategy $p^{\mathrm{SE}}_k$ is named as an SE of normal UC $k \in \widetilde{\mathcal{N}}$ if
\begin{equation}\label{}
\begin{split}
W_k(p^{\mathrm{SE}}_k, {d}_{k}(p^{\mathrm{SE}}_k,\mathbf{p}_{-k}))
=&\max \limits_{p_k}  W_k(p_k, {d}_{k}(p_k,\mathbf{p}_{-k})). \nonumber
\end{split}
\end{equation}
\end{Definition}

\begin{Definition}\label{nas-1}
(Leader-based NE) $\widetilde{\mathbf{p}}^{\mathrm{NE}}  = ({p}^{\mathrm{NE}}_{2},..., {p}^{\mathrm{NE}}_{N})^{\top}$ is named as an NE of normal UCs if $W_k(p_1,p_2^{\mathrm{NE}},..., p_{k-1}^{\mathrm{NE}}, p_k,p_{k+1}^{\mathrm{NE}},...,p_N^{\mathrm{NE}}) \leq W_k(p_1,\widetilde{\mathbf{p}}^{\mathrm{NE}})$ holds, $\forall k \in \widetilde{\mathcal{N}}$.
\end{Definition}

Definitions \ref{sta2-1} and \ref{nas-1} are modified from Definitions \ref{sta2} and \ref{nas}, respectively, by treating $p_1$ as certain constant. Then, the NE of normal UCs with respect to $p_1$ can be solved by following the structure of (\ref{lp3}), which gives
\begin{equation}\label{441}
\mathbf{A}(p_1,p_2,...,p_N)^{\top} = \mathbf{q}.
\end{equation}
Hence, $\widetilde{\mathbf{p}}^{\mathrm{NE}}$ can be expressed as a function of $p_1$, which is
\begin{align}\label{41}
 & \widetilde{\mathbf{p}}^{\mathrm{NE}}  =  \tilde{\mathbf{f}}(p_1) = \widetilde{\mathbf{A}} (\tilde{\mathbf{q}} - \tilde{\mathbf{s}}p_1),
\end{align}
where
\begin{align}
\widetilde{\mathbf{A}} = \left[
      \begin{array}{ccc}
        s_{22} & \cdots & s_{2N} \\
        \vdots & \ddots & \vdots \\
        s_{N2} & \cdots & s_{NN} \\
      \end{array}
    \right]^{-1}, \tilde{\mathbf{q}} =
\left[
  \begin{array}{c}
    q_2 \\
    \vdots \\
    q_N \\
  \end{array}
\right], \tilde{\mathbf{s}}=
 \left[
   \begin{array}{c}
     s_{21} \\
     \vdots \\
     s_{N1} \\
   \end{array}
 \right], \nonumber
\end{align}
with $\tilde{\mathbf{f}}(p_1) = ({f_2}(p_1),...,{f_N}(p_1))^{\top}$ a functional vector. Essentially, (\ref{41}) is obtained by solving $\widetilde{\mathbf{p}}^{\mathrm{NE}}$ in (\ref{441}) by viewing $p_1$ as known. Therefore, as seen from (\ref{41}), $\widetilde{\mathbf{p}}^{\mathrm{NE}}$ can be driven to some desired positions by designing appropriate $p_1$. By plugging $(p_1,{p}^{\mathrm{NE}}_{2},...,{p}^{\mathrm{NE}}_{N})^{\top}$ into (\ref{xww}) in the sense of (\ref{41}), a revised optimization problem with variable $p_1$ can be formulated as
\begin{align}\label{xg}
{\mathbf{(P4)}}  \quad &\max \quad \widetilde{S}(p_1),
\end{align}
where $\widetilde{S}(p_1)$ is defined as the social profit function with independent variable $p_1$ to distinguish it from $S(\mathbf{p})$.

In the proposed scheme, the price change of $p_1$, defined by $\mu_1$, is decided by ISO from the social profit's point of view. Note that $\widetilde{\mathbf{p}}^{\mathrm{NE}}$ is an affine function of $p_1$ in (\ref{41}). Hence, the interaction between $p^{\mathrm{NE}}_k$ and $p_1$ can be viewed as a leader-following problem and the price change of follower $k$, defined by $\mu_k$, can be expressed by
\begin{align}\label{fi-1}
\widetilde{\bm{\mu}}  = \nabla \tilde{\mathbf{f}}(p_1) \mu_1 = - \widetilde{\mathbf{A}}\tilde{\mathbf{s}} \mu_1,
\end{align}
where $\widetilde{\bm{\mu}} =  (\mu_2,...,       \mu_N)^{\top} $. In addition, we define
\begin{equation}\label{fi}
  (\theta_2,...,\theta_N)^{\top}= - \widetilde{\mathbf{A}}\tilde{\mathbf{s}},
\end{equation}
where $\theta_k$ can reflect the leader-following sensitivity of follower $k$ and provides a measurement of the follow-up step-size. Therefore, in a single timescale from instants $t$ to $t+1$, the evolution of $p_1$ and $\widetilde{\mathbf{p}}^{\mathrm{NE}}$ can be described in {Algorithm 2}.

\begin{algorithm}[H]
\caption{State evolution in a single timescale by Nash-Stackelberg game}\label{a2}
\begin{algorithmic}[1]
\State For $t= 0,1,2,...$:
\State Evolution of {g-UC}: $p^{t}_1 \rightarrow p^{t+1}_1$;
\State Evolution of {normal UCs}:
\begin{align}\label{}
\widetilde{\mathbf{p}}^{NE,t}  \rightarrow \widetilde{\mathbf{p}}^{t} (1) \rightarrow \widetilde{\mathbf{p}}^{t} (2) \rightarrow ... \rightarrow \widetilde{\mathbf{p}}^{NE,t+1}. \nonumber
\end{align}
\end{algorithmic}
\end{algorithm}

\begin{Remark}\label{rr2}
In an incomplete communication graph \cite{salehisadaghiani2018distributed}, $p^{t+1}_1$ can be announced to normal UCs along the communication links first (this task is not difficult since $p^{t+1}_1$ is unchanged before the end of a single timescale). Then with the known $p^{t+1}_1$, the NE of normal UCs $\widetilde{\mathbf{p}}^{NE,t+1}$ can be obtained in a distributed manner, e.g., by the algorithm provided in \cite{ye2017distributed}.
\end{Remark}

\subsubsection{Optimization Strategy} With the updating of $p_1$, a multi-timescale leader-following process can be obtained for $\widetilde{\mathbf{p}}^{\mathrm{NE}}$. Based on {Algorithm 2}, in the following discussion, we focus on the outer-loop strategies by assuming that the inner-loop evolutions can be completed in a single timescale. Thus, to avoid confusions, the outer-loop states of g-UC and normal UCs are redefined by $w_1$ and $w_k$, $k \in \widetilde{\mathcal{N}}$, respectively, i.e., $w_1=p_1$, $w_k = {p}^{\mathrm{NE}}_k$, $\widetilde{\mathbf{w}} = \widetilde{\mathbf{p}}^{\mathrm{NE}}$, and $\mathbf{w} = (w_1,\widetilde{\mathbf{w}}^{\top})^{\top}$. Based on (\ref{fi-1}) and (\ref{fi}), in a multi-timescale horizon, we have
\begin{align}\label{tr}
w_k^{t+1} - w_k^{t} = \theta_k (w_1^{t+1} - w_1^{t}), \quad \forall k \in \widetilde{\mathcal{N}}.
\end{align}

The updating algorithm of g-UC is designed as
\begin{align}\label{12ss}
w^{t+1}_1= w^t_1+ \xi(\mathbf{w}^{t}),
\end{align}
where $\xi(\mathbf{w}^t) = (\nabla S^t)^{\top} \bm{\mu}$, $\bm{\mu} =(\mu_1 ,...,\mu_N)^{\top}$, and $S^t = S(\mathbf{w}^t)$ for simplicity. In addition, one can also have
\begin{align}\label{xxb}
\xi(\mathbf{w}^t)  = (\nabla S^t)^{\top} \bm{\eta}\mu_1,
\end{align}
where (\ref{fi-1}) and (\ref{fi}) are used, $\bm{\eta}=(1,\theta_2,...,\theta_N)^{\top}$. Based on (\ref{tr}) and (\ref{12ss}), the updating law of normal UC $k$ is
\begin{align}\label{12ss+1}
w^{t+1}_k = w^t_k + \theta_k \xi(\mathbf{w}^{t}), & \quad \forall k \in \widetilde{\mathcal{N}}.
\end{align}
Then, a compact form of (\ref{12ss}) and (\ref{12ss+1}) can be written as
\begin{align}\label{xcx+1}
\mathbf{w}^{t+1} = \mathbf{w}^{t} + \bm{\eta} \xi(\mathbf{w}^{t}).
\end{align}

{The proposed multi-timescale leader-following framework allows the followers to continuously update their strategies during a single timescale. When the quasi steady states are achieved, the leader performs the subsequential update. The overall updating process of $\mathbf{w}$ is illustrated in Fig. \ref{1as}, which essentially extends Algorithm 2 to multiple timescales.}

\begin{figure}
  \centering
  % Requires \usepackage{graphicx}
  \includegraphics[width=9cm]{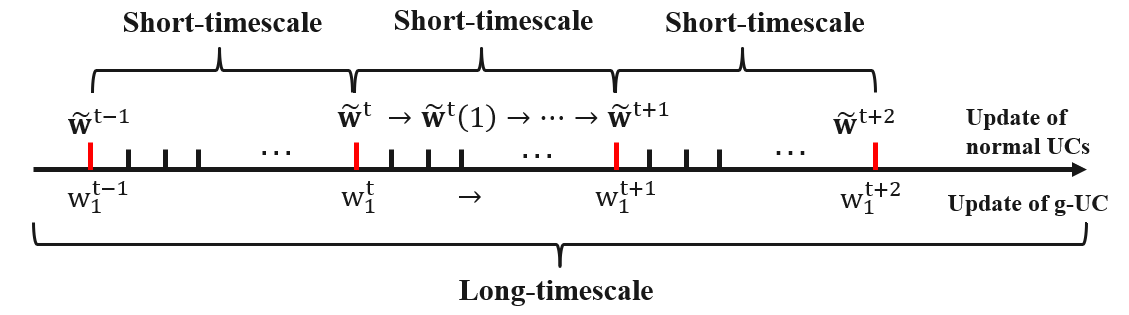}\\
  {\caption{An illustration of the update of $\mathbf{w}$ in multiple timescales. In each short-timescale, $w_1$ is updated first. Then, $\widetilde{\mathbf{w}}$ follows up based on Nash games and the updated $w_1$. When the NE is achieved, a similar updating process of $\mathbf{w}$ is conducted in the next short-timescale.}\label{1as}}
\end{figure}

\begin{Lemma}\label{52}
If $w^*_1$ is a steady state of (\ref{12ss}) with $\mu_1 > 0$, then $w^*_1=\arg\max\limits \widetilde{S}(w_1)$.
\end{Lemma}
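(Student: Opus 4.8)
The plan is to characterize the steady state of the update law \eqref{12ss} and show it must be a maximizer of the single-variable concave function $\widetilde{S}(w_1)$. First I would recall that $w^*_1$ being a steady state of \eqref{12ss} means $w^{t+1}_1 = w^t_1$, i.e. $\xi(\bm{w}^*) = 0$ where $\bm{w}^* = (w^*_1, \tilde{\bm{f}}(w^*_1)^T)^T$ is the full price vector consistent with the leader-following relation \eqref{41}. By \eqref{xxb} we have $\xi(\bm{w}^*) = (\nabla S(\bm{w}^*))^T \bm{\eta}\,\mu_1$, so with $\mu_1 \neq 0$ the steady-state condition reduces to $(\nabla S(\bm{w}^*))^T \bm{\eta} = 0$.

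The key observation is that $\bm{\eta} = (1,\theta_2,\dots,\theta_N)^T$ is exactly the direction in which the vector $(p_1, \tilde{\bm{f}}(p_1)^T)^T$ moves as $p_1$ varies, since $\tilde{\bm{f}}$ is affine with $\nabla \tilde{\bm{f}}(p_1) = -\widetilde{\bm{A}}\tilde{\bm{\sigma}} = (\theta_2,\dots,\theta_N)^T$ by \eqref{fi}. Therefore, by the chain rule applied to the composition defining $\widetilde{S}$ from $S$ (recall \textbf{Problem (P5)} is obtained by substituting $(p_1,{p}^{NE}_{2},\dots,{p}^{NE}_{N})^T$ into $S(\bm{p})$ in the sense of \eqref{41}),
\begin{equation}\label{chainrule}
\frac{d\widetilde{S}}{dw_1}(w^*_1) = \big(\nabla S(\bm{w}^*)\big)^T \bm{\eta}.
\end{equation}
Hence the steady-state condition is precisely $\widetilde{S}'(w^*_1) = 0$, i.e. $w^*_1$ is a stationary point of $\widetilde{S}$.

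To upgrade ``stationary point'' to ``global maximizer'' I would invoke concavity: by \textbf{Lemma \ref{ov}}, $S(\bm{p})$ is concave on $\mathbb{R}^N$ (under \textbf{Assumption \ref{as1}}), and $\widetilde{S}(w_1) = S\big((w_1, \tilde{\bm{f}}(w_1)^T)^T\big)$ is the restriction of a concave function to an affine line, hence concave in the scalar variable $w_1$. A stationary point of a concave function is a global maximizer, so $w^*_1 = \arg\max \widetilde{S}(w_1)$, which is the claim.

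The main obstacle — or at least the step requiring the most care — is justifying the chain-rule identity \eqref{chainrule} cleanly, i.e. making precise that $\widetilde{S}$ really is $S$ composed with the affine map $p_1 \mapsto (p_1,\tilde{\bm f}(p_1)^T)^T$ and that the gradient of $S$ in \eqref{xxb} is evaluated at the consistent point $\bm{w}^*$ rather than at an arbitrary $\bm{w}^t$ off the leader-following manifold. One must also note the harmless sign/scaling subtlety: $\xi(\bm{w}^t)$ carries the factor $\mu_1$, so a zero of $\xi$ with $\mu_1\neq 0$ is genuinely a zero of $(\nabla S)^T\bm\eta$ and not an artifact of $\mu_1 = 0$; that is exactly why the hypothesis $\mu_1 \neq 0$ appears in the statement. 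The concavity argument itself is then immediate from \textbf{Lemma \ref{ov}} and the elementary fact that affine restrictions preserve concavity.
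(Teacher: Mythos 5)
Your proposal is correct and follows essentially the same route as the paper's proof: both identify the steady-state condition $\xi(\bm{w}^*)=0$ with $(\nabla S(\bm{w}^*))^T\bm{\eta}=0$ via the factor $\mu_1\neq 0$, establish by the chain rule that this is exactly $\nabla\widetilde{S}(w_1^*)=0$ using the affine map \eqref{41}, and conclude via the concavity of $\widetilde{S}$ (inherited from \textbf{Lemma \ref{ov}}) that the stationary point is the global maximizer.
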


\begin{proof}
See {Appendix \ref{521}}.
\end{proof}
Based on {Lemma \ref{52}} and (\ref{tr}), given the optimal strategy $w^*_1$ of the leader, the steady state of followers $\widetilde{\mathbf{w}}^*=(w_2^*,...,w_N^*)^{\top}$ can be characterized by
\begin{align}\label{tr1}
w_k^*- w_k^{t} = \theta_k (w_1^* - w_1^{t}), \quad \forall k \in \widetilde{\mathcal{N}}.
\end{align}

\begin{Theorem}\label{213}
Suppose that $\bm{\eta}^{\top} \mathbf{H}_S \bm{\eta} \neq 0$, where $\mathbf{H}_S$ is the Hessian matrix of $S$. By (\ref{12ss}) and (\ref{12ss+1}), $w_1$ and $\widetilde{\mathbf{w}}$ converge to $w^*_1$ and $\widetilde{\mathbf{w}}^*$ at a linear rate, respectively, if $\mu_1$ is chosen within the range determined by $\mu_1  \bm{\eta}^{\top} \mathbf{H}_S \bm{\eta} \in (-2,0)$.
\end{Theorem}

\begin{proof}
See {Appendix \ref{2131}}.
\end{proof}

\begin{Remark}
In {Theorem \ref{213}}, the validity of $\bm{\eta}^{\top} \mathbf{H}_S \bm{\eta} \neq 0$ is decided by the system parameters, which ensures that the social profit is improvable under the influence of the leader. In the trivial case that $\bm{\eta}^{\top} \mathbf{H}_S \bm{\eta} = 0$, the social profit will remain unchanged with any change of the leader's price. In addition, since $\mathbf{H}_S$ is negative semi-definite (see the proof of {Lemma \ref{ov}}), $\mu_1$ solved by {Theorem \ref{213}} is always positive.
\end{Remark}

The optimization procedure in the basic leader-following framework is stated in {Algorithm 3}.

\begin{algorithm}[H]
\caption{Basic multi-timescale leader-following optimization algorithm}
\begin{algorithmic}[1]
\State Initialize price vector $\mathbf{w}^0$ ($\widetilde{\mathbf{w}}^0$ is an NE with certain $w^0_1$);
\For {$t=0,1,...$}
\For {$k=1,2,...N$} (in parallel)
\State Update $w^{t+1}_1$ and $w^{t+1}_k$ by (\ref{12ss}) and (\ref{12ss+1}), respectively, $k \in \widetilde{\mathcal{N}}$;
\EndFor;
\EndFor;
\end{algorithmic}
\end{algorithm}

\begin{Remark}
(\ref{12ss}) and (\ref{12ss+1}) reflect a decentralized realization, where the ISO only needs to provide $w_1$ to the g-UC and thus can avoid the overload of the coordination center and the associated communication infrastructures \cite{cai2016distributed}. In practice, the multi-timescale leader-following scheme is applicable in some repeated biding markets, such as some markets in Germany \cite{germ}. The optimal solution is achieved after sufficiently many bidding rounds. {When applying Algorithm 3 during multiple optimization intervals, e.g., hourly optimizations during a day \cite{yu2015real}, a single short-timescale indicates an update of (\ref{xcx+1}) and the optimal pricing strategy of UCs in each interval is the steady state of (\ref{xcx+1}).} However, one can note that Algorithm 3 cannot guarantee PoA to be 1 due to the different natures of {Problems (P3)} and {(P4)}.
\end{Remark}

\subsection{DFA Based Multi-timescale Leader-following Framework}\label{sp}

In this section, we propose a DFA strategy, by which the maximum market efficiency can be achieved (PoA=1). In this scheme, after collecting the demand functions of users, the ISO works out the corresponding ``ameliorated'' functions and send them to UCs. The rationality of this behavior is explained as follows.
\begin{enumerate}
  \item A social-centric ISO has the responsibility to improve the social welfare of the market by making compulsory market rules \cite{shoreh2016survey,weidlich2008critical}.
  \item In practice, the proposed coordination strategy can be realized completely imperceptibly or by signing contracts with users and UCs in advance.
\end{enumerate}
In this process, UCs first optimize their own objective functions with the ameliorated demand functions provided by the ISO. Then, the ISO clears the market and distributes the energy product to users with {{real demand functions}}. Note that ISO does not change the selfish instinct of normal UCs since they still compute NE by using the virtual demand functions.

\subsubsection{DFA Based Leader-following Problem} We design $\lambda_{k}$ as the coefficient of $p_1$ (i.e., $w_1$) in the real demand function (\ref{con1}), $k \in \widetilde{\mathcal{N}} $. Then, the ameliorated demand functions of user $i$ are designed as
\begin{align}
  & \bar{d}_{i,k}=  \frac{1}{\beta}(\frac{1}{N}\sum_{j \in \widetilde{\mathcal{N}} }w_j+ \frac{1}{N} \lambda_{k} w_1 -w_k )+\frac{y_{i}}{N}, \label{con3} \\
  & \bar{d}_{i,1}=y_i-\sum_{k \in \widetilde{\mathcal{N}} } \bar{d}_{i,k}. \label{con4}
\end{align}
(\ref{con4}) implies that the stipulated demand $y_{i}$ remains unchanged after the emendations. Similar to (\ref{gge}), the profit function of UC $k$ can be written as
\begin{equation}\label{gge1}
{W}_{k}(p_k, \bar{d}_{k})=p_k \bar{d}_{k}- D_k(\bar{d}_k),
\end{equation}
where $\bar{d}_{k}=\sum_{i \in \mathcal{M}}\bar{d}_{i,k}$. Then, with the newly obtained objective function (\ref{gge1}), a revised leader-based NE among normal UCs can be obtained by the same derivation procedure introduced in Section \ref{eb}. Note that $\lambda_k$ only exists with $w_1$, then the NE can be straightforwardly obtained by following the structure of (\ref{41}), which gives
\begin{align}\label{dee1}
    \widetilde{\mathbf{w}}= \widetilde{\mathbf{A}}(\tilde{\mathbf{q}}-\mathrm{diag}[ {\tilde{\mathbf{s}}}]\widetilde{\bm{\lambda}} w_1),
\end{align}
where $\widetilde{\bm{\lambda}}=(\lambda_2,...,\lambda_{N})^{\top}$ and $\mathrm{diag}[ {\tilde{\mathbf{s}}}]$ is a diagonal matrix with the elements of ${\tilde{\mathbf{s}}}$ placed on the diagonal.

\subsubsection{DFA Based Optimization Strategy} In the following, we will give suitable $\widetilde{\bm{\lambda}}^{t+1}$ for bidding round $t+1$ for normal UCs such that the market efficiency is maximized. With (\ref{dee1}), the updating law of normal UCs in bidding round $t+1$ can be written as
\begin{align}\label{dee2}
    \widetilde{\mathbf{w}}^{t+1}= \widetilde{\mathbf{A}}(\tilde{\mathbf{q}}-\mathrm{diag}[ {\tilde{\mathbf{s}}}]\widetilde{\bm{\lambda}}^{t+1} w^{t+1}_1),
\end{align}
and the updating law for $\widetilde{\bm{\lambda}}^{t+1}$ is designed as
\begin{align}\label{ed}
& \widetilde{\bm{\lambda}}^{t+1} = \Phi^{t+1} \widetilde{\mathbf{A}}^{-1} \Psi^{t+1}, \\
& \Phi^{t+1}=
\left[
  \begin{array}{ccc}
    \dfrac{1}{{s}_{21}w_1^{t+1}} &   & \mathbf{0} \\
      & \ddots &   \\
    \mathbf{0} &   & \dfrac{1}{{s}_{N1}w_1^{t+1}} \\
  \end{array}
\right], \\
&
\Psi^{t+1} =
 \left[
   \begin{array}{c}
     \sum_{v=0}^t \dfrac{\nabla_2 S^v}{\nabla_1 S^v}(w^v_1 - w^{v+1}_1) \\
     \vdots \\
     \sum_{v=0}^t \dfrac{\nabla_N S^v}{\nabla_1 S^v}( w^v_1 -w^{v+1}_1   ) \\
   \end{array}
 \right].
\end{align}
(\ref{dee2}) implies that in round $t+1$, $\widetilde{\bm{\lambda}}^{t+1}$ and $w^{t+1}_1$ will be updated and announced ahead of the actions of normal UCs.
Then the follow-up step-size of normal UCs can be calculated by
\begin{align}\label{ed2}
 \widetilde{\mathbf{w}}^{t+1} -  \widetilde{\mathbf{w}}^{t} & = \widetilde{\mathbf{A}} \mathrm{diag}[ {\tilde{\mathbf{s}}}] \widetilde{\bm{\lambda}}^{t} w^t_1 - \widetilde{\mathbf{A}} \mathrm{diag}[ {\tilde{\mathbf{s}}}]\widetilde{\bm{\lambda}}^{t+1} w^{t+1}_1 \nonumber \\
& =
\left[
  \begin{array}{c}
     \dfrac{\nabla_2 S^t}{\nabla_1 S^t}(w^{t+1}_1-w^t_1)\\
    \vdots \\
    \dfrac{\nabla_N S^t}{\nabla_1 S^t}(w^{t+1}_1-w^t_1) \\
  \end{array}
\right].
\end{align}
(\ref{ed2}) defines a leader-following relationship with varying sensitivity vector $\bar{\bm{\theta}}^t=(\bar{\theta}^t_2,...,\bar{\theta}^t_N)^{\top} \triangleq ( \frac{\nabla_2 S^t}{\nabla_1 S^t},...,\frac{\nabla_N S^t}{\nabla_1 S^t}  )^{\top}$.

With the DFA strategy, the updating algorithm of g-UC is designed as
\begin{align}\label{12ss+2}
w^{t+1}_1= w^t_1+ \bar{\xi}(\mathbf{w}^{t}),
\end{align}
where $\bar{\xi}(\mathbf{w}^t) = (\nabla S^t)^{\top} \bar{\bm{\eta}}^t \mu^t_1$ with $\bar{\bm{\eta}}^t = (1,\bar{\theta}^t_2,..., \bar{\theta}^t_N)^{\top}$.
Based on (\ref{ed2}) and (\ref{12ss+2}), the updating law for normal UCs is
\begin{align}\label{12ss+3}
w^{t+1}_k = w^t_k + \bar{\theta}^t_k \bar{\xi}(\mathbf{w}^{t}),\quad \forall k \in \widetilde{\mathcal{N}}.
\end{align}
Then, a compact form of (\ref{12ss+2}) and (\ref{12ss+3}) can be written as
\begin{align}\label{xcx+2}
\mathbf{w}^{t+1} = \mathbf{w}^{t} + \bm{\bar{\eta}}^t \bar{\xi}(\mathbf{w}^{t}).
\end{align}

\begin{Lemma}\label{p1}
If ${\bar{\xi}}(\mathbf{w}^t) \rightarrow 0$ and $\mu_1 > 0$, we have $\nabla_l S(\mathbf{w}^t) \rightarrow 0$, $\forall l \in \mathcal{N}$, $t=0,1,2,...$
\end{Lemma}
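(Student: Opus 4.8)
The goal of Lemma \ref{p1} is to show that when the modified SPI $\bar\xi(\bm{w}^t)$ vanishes, every partial derivative $\nabla_l S(\bm w^t)$ vanishes; the content of the DFA construction is precisely that the follower sensitivities $\bar\theta^t_k$ have been rigged to equal $\nabla_k S^t/\nabla_1 S^t$, so that the single scalar $\bar\xi$ encodes all $N$ gradient components at once.

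\medskip

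The plan is to expand $\bar\xi(\bm w^t)$ explicitly using (\ref{cdd}), (\ref{50}) and (\ref{50+1}). Writing $\nabla_l S^t := \nabla_l S(\bm w^t)$, we have
\begin{align}
\bar\xi(\bm w^t) = (\nabla S^t)^T\bar{\bm\eta}^t\,\mu_1^t
 = \mu_1^t\Big(\nabla_1 S^t + \sum_{k\in\widetilde{\mathcal N}} \bar\theta_k^t\,\nabla_k S^t\Big)
 = \mu_1^t\Big(\nabla_1 S^t + \sum_{k\in\widetilde{\mathcal N}} \frac{(\nabla_k S^t)^2}{\nabla_1 S^t}\Big). \nonumber
\end{align}
Multiplying through by $\nabla_1 S^t$ (and noting $\nabla_1 S^t\neq 0$ is implicitly needed for $\bar\theta^t$ to be well-defined, as in the definition (\ref{50})), this equals $\dfrac{\mu_1^t}{\nabla_1 S^t}\sum_{l\in\mathcal N}(\nabla_l S^t)^2 = \dfrac{\mu_1^t}{\nabla_1 S^t}\,\|\nabla S^t\|^2$. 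Since $\mu_1^t = \mu_1 > 0$ by hypothesis, $\bar\xi(\bm w^t)\to 0$ forces $\|\nabla S^t\|^2/\nabla_1 S^t \to 0$.

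\medskip

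The remaining step is to conclude $\nabla_l S^t\to 0$ for every $l$ from $\|\nabla S^t\|^2/\nabla_1 S^t\to 0$. This is where a little care is required: one must rule out the degenerate possibility that $\|\nabla S^t\|^2$ stays bounded away from zero while $|\nabla_1 S^t|\to\infty$. Since $|\nabla_1 S^t|^2 \le \|\nabla S^t\|^2$, we have $|\nabla_1 S^t| = |\nabla_1 S^t|^2/|\nabla_1 S^t| \le \|\nabla S^t\|^2/|\nabla_1 S^t| \to 0$, so in fact $\nabla_1 S^t\to 0$; and then $\|\nabla S^t\|^2 = (\|\nabla S^t\|^2/\nabla_1 S^t)\cdot\nabla_1 S^t \to 0\cdot 0 = 0$ (the first factor tends to $0$, the second is bounded), hence $\nabla_l S^t\to 0$ for all $l\in\mathcal N$. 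I would state this chain of elementary inequalities as the crux. The main obstacle is purely this bookkeeping around the division by $\nabla_1 S^t$ — one needs $\nabla_1 S^t$ to be nonzero along the iteration for $\bar{\bm\theta}^t$ and $\Phi^t$ in (\ref{ed}) to be defined in the first place, and the argument above shows that the limit is automatically consistent (the whole gradient goes to zero together). If the paper's convention is instead that $\bar\xi\to 0$ is only invoked while $\nabla_1 S^t$ remains bounded (which is the natural operating regime near a maximizer of $\widetilde S$), then the second half collapses immediately: $\|\nabla S^t\|^2 = \bar\xi(\bm w^t)\cdot\nabla_1 S^t/\mu_1 \to 0$ since $\bar\xi\to0$ and $\nabla_1 S^t/\mu_1$ is bounded. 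Either way the proof is two or three lines once the expansion of $\bar\xi$ is in place.
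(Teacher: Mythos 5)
Your proof is correct and rests on the same expansion $\bar\xi(\bm w^t)=\mu_1^t\bigl(\nabla_1 S^t+\sum_{k\in\widetilde{\mathcal N}}(\nabla_k S^t)^2/\nabla_1 S^t\bigr)$ and the same two observations as the paper: first that the nonnegative second term forces $|\bar\xi|\ge\mu_1^t|\nabla_1 S^t|$, hence $\nabla_1 S^t\to0$, and then that the remaining components must vanish as well; the paper merely phrases both steps as proofs by contradiction while you argue directly via $|\nabla_1 S^t|\le\|\nabla S^t\|^2/|\nabla_1 S^t|$. Your side remark about needing $\nabla_1 S^t\neq0$ for $\bar{\bm\theta}^t$ to be well-defined is a fair observation that the paper leaves implicit, but it does not change the argument.
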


\begin{proof}
See {Appendix \ref{pp1}}.
\end{proof}

\begin{Theorem}\label{2133}
By (\ref{xcx+2}), $w_1$ and $\widetilde{\mathbf{w}}$ converge to the optimal solution to {Problem (P3)} if $\mu^t_1 \parallel \bar{\bm{\eta}}^t \parallel^2 = \frac{1}{L}$,
where $L$ is the Lipschitz constant of $\nabla S$, $t=0,1,2,..$.
\end{Theorem}

\begin{proof}
See {Appendix \ref{21331}}.
\end{proof}

The DFA based optimization procedure is stated in {Algorithm 4}.
\begin{algorithm}[H]
\caption{DFA based multi-timescale leader-following optimization algorithm}
\begin{algorithmic}[1]
\State Initialize price vector $\mathbf{w}^0$ ($\widetilde{\mathbf{w}}^0$ is an NE with certain $w^0_1$);
\For {$t=0,1,...$}
\For {$k=1,2,...N$} (in parallel)
\State Update $w^{t+1}_1$ and $w^{t+1}_k$ by (\ref{12ss+2}) and (\ref{12ss+3}), respectively, $k \in \widetilde{\mathcal{N}}$;
\EndFor;
\EndFor;
\end{algorithmic}
\end{algorithm}

{\begin{Remark}\label{re}
{The distributed realization of Algorithm 4 can be similar to Algorithm 3 as discussed in Remark \ref{rr2}, except that the ``ameliorated'' demand functions are needed to be announced to UCs efficiently in each short-timescale since the coefficient $\widetilde{\bm{\lambda}}$ varies with $t$ in (\ref{ed}).} Therefore, we can set the ISO as the central node, who is linked to all the UCs. In addition, by the proposed DFA strategy, the true demand functions are only kept by ISO and unknown to UCs, which overcomes the privacy-releasing issue of conventional Stackelberg game approaches.
\end{Remark}}

\subsection{Optimization Strategy with Constraints}\label{c}

In the following, we consider convex constraints $p_1 \in \mathcal{P}_1$ and $\widetilde{\mathbf{p}}= (p_2,...,p_N)^{\top}  \in \widetilde{\mathcal{P}}$ for g-UC and normal UCs, respectively, with $\mathcal{P} = \mathcal{P}_1\times \widetilde{\mathcal{P}}$. Then, Problem (P2) is modified into a constrained NE problem
\begin{align}\label{gene1}
{\mathbf{(P5)}}  \quad \max\limits_{p_k}  \quad & W_k({p_k},\mathbf{p}_{-k}), \quad \forall k \in \mathcal{N} \nonumber \\
\hbox{subject to} \quad & \mathbf{p} \in \mathcal{P}.
\end{align}
Given certain $p_1$, the optimization problem of normal UCs can be formulated as
\begin{align}\label{gene2}
{\mathbf{(P6)}}  \quad \max\limits_{p_k}  \quad & W_k({p_k},\mathbf{p}_{-k}), \quad \forall k \in \widetilde{\mathcal{\mathcal{N}}} \nonumber \\
\hbox{subject to} \quad & \widetilde{\mathbf{p}} \in \widetilde{\mathcal{P}}.
\end{align}
In {Problem (P6)}, the NE of normal UCs can be nonlinearly determined by $p_1$. Hence, the newly obtained social profit optimization problem of the leader can be formulated as
\begin{align}
{\mathbf{(P7)}}  \quad  \max\limits_{p_1 \in \mathcal{P}_1} \quad  & S (p_1,\widetilde{\mathbf{p}}^{\mathrm{NE}}) \nonumber \\
\hbox{subject to} \quad & \widetilde{\mathbf{p}}^{\mathrm{NE}} \in  \mathrm{Arg} \hbox{ Problem (P6)}. \label{44+1}
\end{align}
It can be noted that {Problem (P7)} is a bi-level optimization problem and is possibly non-convex in $p_1$, whose global optimal solution can be difficult to obtain mathematically, as discussed by many works on similar bi-level optimization problems \cite{cui2018two,ma2016energy,savard1994steepest}. In this work, we use the affine function $\tilde{\mathbf{f}}$ defined in (\ref{41}) to approximate the nonlinear relation (\ref{44+1}). Then, the constraint on $p_1$ will be addressed by projecting the result in each updating round onto $\mathcal{P}_1$, which is technically easy to solve and produces an approximate optimal result. Then, based on (\ref{12ss}), the updating algorithm of g-UC with constraint $\mathcal{P}_1$ is designed as
\begin{align}\label{33+1}
w^{t+1}_1 & = \Omega_{\mathcal{P}_1} [w^t_1+ \xi(\mathbf{w}^{t})] \nonumber \\
& = \Omega_{\mathcal{P}_1} [w^t_1+ \mu_1 \nabla \widetilde{S}(w^t_1)],
\end{align}
where (\ref{33}) in the Appendix is used and $\Omega_{\mathcal{P}_1}[\cdot]$ is an Euclidean projection onto $\mathcal{P}_1$. Then based on (\ref{xcx+1}), a compact form of the updating algorithm of UCs can be written as
\begin{align}\label{w1w}
\mathbf{w}^{t+1} = & \Omega_{\mathcal{P}} [\mathbf{w}^t + \bm{\eta} \xi(\mathbf{w}^{t})],
\end{align}
whose convergence is closely related to the dynamics of $w_1$ in (\ref{33+1}). The convergence of (\ref{33+1}) can be ensured by letting $\mu_1= \frac{1}{L_1}$,
where $L_1$ is the Lipschitz constant of $\nabla \widetilde{S}(w_1)$ \cite[Thm. 3.7]{bubeck2015convex}.
Similar modifications on (\ref{xcx+2}) can be made with the help of (\ref{yy}) in the Appendix. A projected updating algorithm of (\ref{xcx+2}) is given as
\begin{align}\label{xcx+3}
\mathbf{w}^{t+1} = & \Omega_{\mathcal{P}} [\mathbf{w}^{t} + \bm{\bar{\eta}}^t \bar{\xi}(\mathbf{w}^{t})] \nonumber \\
= & \Omega_{\mathcal{P}} [\mathbf{w}^{t} +  \mu_1^t \parallel \bar{\bm{\eta}}^t \parallel^2 \nabla S^t].
\end{align}
The convergence of (\ref{xcx+3}) can be ensured by choosing the step-size $\mu_1^t$ determined by $\mu_1^t \parallel \bar{\bm{\eta}}^t \parallel^2= \frac{1}{L}$, where $L$ is the Lipschitz constant of $\nabla S$ \cite[Thm. 3.7]{bubeck2015convex}.

In the following, we will discuss how to formulate some physical constraints for formula (\ref{xcx+3}) by considering ramp rate limits, generation limits, and capacity limits of power lines.

\subsubsection{Ramp Rate Limits}

The output of UC $k$ may be affected by the output of the previous instant due to the ramp rate limits \cite{niknam2012new}. To address this issue, we define the up and down ramp rate limits of UC $k$ by $\overline{\delta}_k>0$ and $\underline{\delta}_k<0$, respectively. In addition, we let the output of UC $k$ in the previous time interval be $d'_{k}$, which is known in the current interval. Then, the ramp rate constraints can be formulated as
\begin{align}
& d_k - d'_{k} \geq \Delta \underline{\delta}_k , \label{3.1} \\
& d_k - d'_{k} \leq \Delta \overline{\delta}_k , \label{3.2}
\end{align}
where $\Delta$ is the length of certain operation interval. Based on (\ref{con2}), at the NE, the energy supply of UC $k$ can be given by
\begin{align}
d_{k} = \frac{M}{\beta}(\frac{1}{N}\sum_{j \in \mathcal{N}} w_j-w_k )+\frac{Y}{N}. \label{3.5}
\end{align}
Therefore, the feasible region under the ramp rate limits of UC $k$ can be obtained by combining (\ref{3.1})-(\ref{3.5}), which gives
\begin{align}\label{3.7}
\mathcal{P}^{\mathrm{r}}_k= \{ \mathbf{w} | \frac{M}{\beta}(\frac{1}{N} \sum_{j \in \mathcal{N}}w_j  -w_k )+\frac{Y}{N} \nonumber \\
 \in [d'_{k} + \Delta \underline{\delta}_k,d'_{k} + \Delta \overline{\delta}_k]\}.
\end{align}
Therefore, the overall ramp rate constraint can be obtained as $\mathcal{P}^{\mathrm{r}}= \bigcap_{k \in \mathcal{N}} \mathcal{P}^{\mathrm{r}}_k$.

\subsubsection{Generation Limits}

The upper and lower bounds of the output of UC $k$ are defined by $u_k$ and $l_k$, respectively, which means the feasible region of $d_k$ is
\begin{align}\label{cl}
d_k \in [l_k,u_k].
\end{align}
{Then based on (\ref{3.5}) and (\ref{cl}), the feasible region of the generation limits of UC $k$ can be obtained as}
\begin{align}\label{551}
\mathcal{P}^{\mathrm{g}}_k= \{ \mathbf{w} | \frac{M}{\beta}(\frac{1}{N} \sum_{j \in \mathcal{N}}w_j -w_k ) +\frac{Y}{N} \in [l_k,u_k] \}.
\end{align}
Then, the feasible region under the overall generation limits is $\mathcal{P}^{\mathrm{g}}= \bigcap_{k \in \mathcal{N}} \mathcal{P}^{\mathrm{g}}_k$.

\subsubsection{Capacity Limits of Power Lines}

The capacity constraint of the power line for carrying demand $d^*_{i,k}$ is defined by
\begin{align}\label{co0}
d^*_{i,k} \in [\underline{\rho}_{i,k}, \overline{\rho}_{i,k}].
\end{align}
At the NE, based on (\ref{con1}), we have
\begin{equation}\label{co1}
d^*_{i,k}=\frac{1}{\beta}(\frac{1}{N}\sum_{j \in \mathcal{N}}w_j -w_k )+\frac{y_{i}}{N}.
\end{equation}
Then, by (\ref{co0}) and (\ref{co1}), the capacity limits of the power line between user $i$ and UC $k$ can be given by
\begin{align}\label{3.9}
\mathcal{P}^{\mathrm{c}}_{i,k}= \{ \mathbf{w} | \frac{1}{\beta}(\frac{1}{N}  \sum_{j \in \mathcal{N}} w_j  -w_k )+\frac{y_i}{N} \in [\underline{\rho}_{i,k}, \overline{\rho}_{i,k}] \}.
\end{align}
Therefore, the feasible region under the overall capacity limits can be obtained as $\mathcal{P}^{\mathrm{c}}= \bigcap_{i \in \mathcal{M}} \bigcap_{k \in \mathcal{N}} \mathcal{P}^{\mathrm{c}}_{i,k}$.
With this settlement, the rationality condition of $\mathbf{d}^*_i$ discussed in Corollary \ref{nb} can be satisfied by letting $\underline{\rho}_{i,k} = 0$, $\forall k \in \mathcal{N}$.}

In some scenarios, to realize the social optimality, it is possible to have negative optimal profits for some UCs in the sense that the social optimal solution that benefits all the agents may not always exist depending on the specific parameters of the market. For instance, as a result of competition, some UCs may suffer budget deficits under the influence of market manager if their cost profiles are not good enough. To address this issue, there are two possible solutions: {{(i)}} since the social profit is improved, the ISO may redistribute the profit from the beneficiaries to the agents with losses without affecting the social net profit of the market; {{(ii)}} one may set limitations on energy prices and generation limits for UCs. {For example, based on (\ref{cl}) and (\ref{551}), one can consider $\mathcal{P}^{\mathrm{p}}_k = \{w_k | w_k \geq \sup_{d_k \in [l_k,u_k]} ( a_k d_k +b_k+\frac{c_k}{d_k} ) \}$, such that $W_k (p_k,d_k)$ in (\ref{gge}) is non-negative with $p_k \in \mathcal{P}^{\mathrm{p}}_k$ and $d_k \in [l_k,u_k]$, $l_k,u_k > 0$. Then the overall feasible region is $\mathcal{P}^{\mathrm{gp}} = \mathcal{P}^{\mathrm{g}} \bigcap \mathcal{P}^{\mathrm{p}}$, where $ \mathcal{P}^{\mathrm{p}} = \mathcal{P}^{\mathrm{p}}_1 \times ... \times \mathcal{P}^{\mathrm{p}}_N$.}

\section{Numerical results}

{In this section, the performance of the proposed algorithms is verified by an IEEE 9-bus system model.}
\begin{figure}[H]
  \centering
  % Requires \usepackage{graphicx}
  \includegraphics[width=5cm]{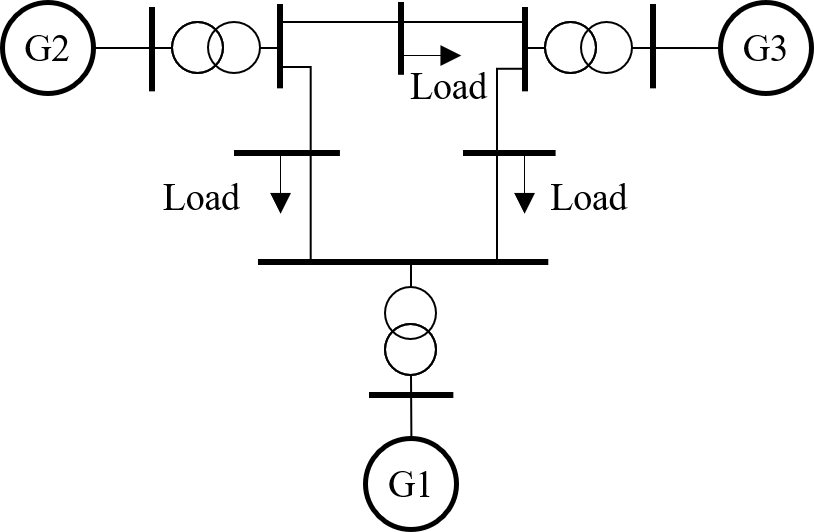}\\
  {\caption{A simplified illustration of an IEEE 9-bus system.}\label{9bus}}
\end{figure}

{\subsection{Simulation Setup}

A simplified illustration of an IEEE 9-bus system is shown in Fig. \ref{9bus}. In this system, there are 3 generators, i.e., G1 to G3, which are assumed to be maintained by UC 1 to UC 3, respectively. In addition, the overall load is assumed to be shared by 5 different energy users. To realize the leader-following mechanism, we set UC 1 as the g-UC with UCs 2 and 3 being the normal UCs.

In this simulation, we will apply our algorithms during 24 hours in certain day (i.e., 24 optimization intervals). The parameters of the UCs are shown in Table \ref{t2} \cite{wang2019social}. To set up the ramp rate constraints, the output of generators during the last hour before the day is uniformly set as $4$ and $\Delta$ is set as 1. The parameters of users' utility functions are set as $\alpha=30$ and $\beta=5$ \cite{wang2019social}. The hourly energy demands of users are arbitrarily selected within $[0,6]$. The lower bound of energy demands is uniformly set as $\underline{\rho}_{i,k} = 0$, $\forall i \in \mathcal{M}$, $\forall k \in \mathcal{N}$.}

\begin{table}
{\caption{Parameters of UCs}\label{t2}
\label{tab2}
\begin{center}
\begin{tabular}{cccc}
\bottomrule
\hspace{8mm} & \hspace{4mm} UC 1 \hspace{4mm} & \hspace{4mm} UC 2 \hspace{4mm} & \hspace{4mm} UC 3 \hspace{4mm}  \\
\bottomrule
$a$ & 0.1 & 0.2 & 0.05 \\
\hline
$b$ & 0.2 & 0.5 & 0.1 \\
\hline
$c$ & 0 & 0.1 & 0.2 \\
\hline
$u$ & 2 & 3 & 4 \\
\hline
$l$ & -2 & -2 & -1.5 \\
\hline
$\overline{\delta}$ & 7 & 8 & 9 \\
\hline
$\underline{\delta}$ & 0.5 & 0.5 & 0.5 \\
\bottomrule
\end{tabular}
\end{center}}
\end{table}

{\subsection{Simulation Result}

First, we optimize the social profit of the market by considering the ramp rate limits, generation limits, and lower bound of users' demands. The obtained results and the explanations are summarized as follows.

\begin{enumerate}
  \item The optimal pricing strategies in the day are shown in Fig. \ref{pri}. The outputs of generators in different hours are depicted in Fig. \ref{op}. It can be seen that, due to the high energy demand in the 9th hour, the upper generation limit of G1 is activated.
  \item The ramp rates of the generators in different hours are shown in Fig. \ref{rr}. Notably, in the 18th hour, the up ramp rate of G1 is bounded by the upper limit. In the 15th and 24th hours, the down ramp rate of G3 is bounded by the lower limit.
  \item The energy-purchasing strategies of users are depicted in Figs. \ref{sm21}-(a) to (e). It can be seen that, for all users, the demands from the UCs are negatively correlated with the settled prices, which verifies the rationality of (\ref{con1}).
\end{enumerate}}

\begin{figure}
  \centering
  % Requires \usepackage{graphicx}
  \includegraphics[width=9cm]{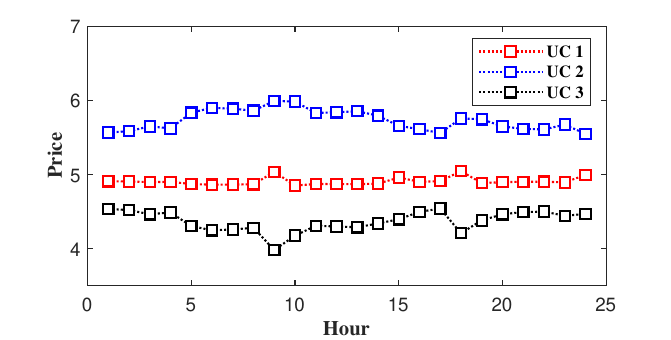}\\
  {\caption{Optimal energy prices in different hours.}\label{pri}}
\end{figure}

\begin{figure}
  \centering
  % Requires \usepackage{graphicx}
  \includegraphics[width=9cm]{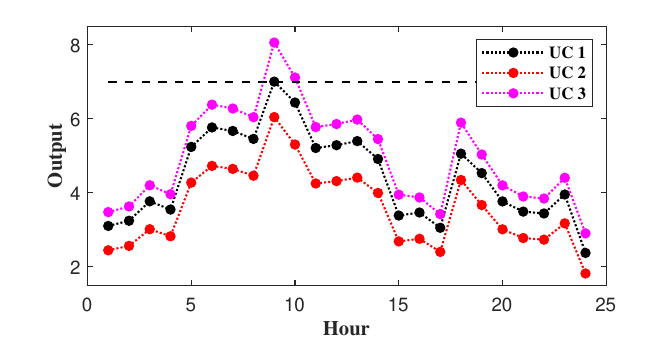}\\
  {\caption{Generation quantities in different hours. The black dash line is the upper generation limit of G1.}\label{op}}
\end{figure}

\begin{figure}
  \centering
  % Requires \usepackage{graphicx}
  \includegraphics[width=9cm]{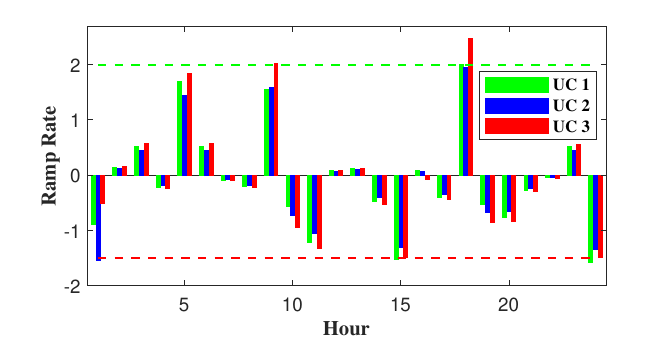}\\
  {\caption{Ramp rates in different hours. A positive value means an up ramp rate, and vice versa. The green and red dash lines are the up and down ramp rate limits of G1 and G3, respectively.}\label{rr}}
\end{figure}

\begin{figure}[hbpt]
\centering
\subfigure[]{
\begin{minipage}[t]{0.47\linewidth}
\centering
\includegraphics[width=4.5cm]{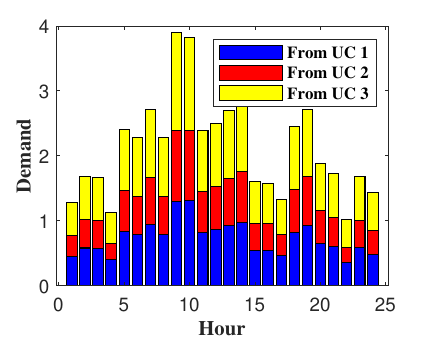}
%\caption{fig1}
\end{minipage}%
}%
\subfigure[]{
\begin{minipage}[t]{0.47\linewidth}
\centering
\includegraphics[width=4.5cm]{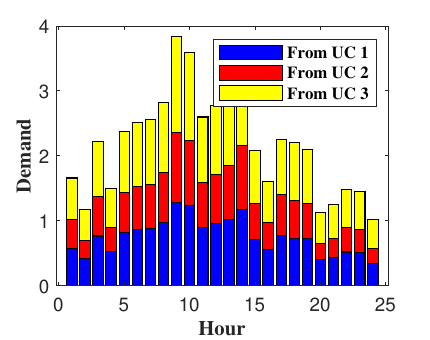}
%\caption{fig2}
\end{minipage}%
} \\ %
\subfigure[]{
\begin{minipage}[t]{0.47\linewidth}
\centering
\includegraphics[width=4.5cm]{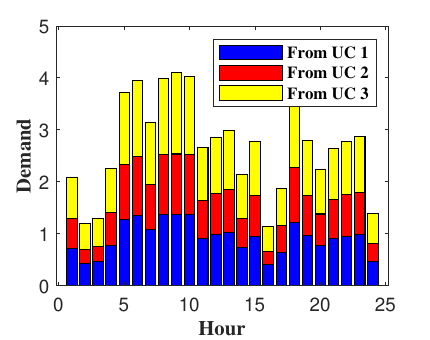}
%\caption{fig2}
\end{minipage}%
}%
\subfigure[]{
\begin{minipage}[t]{0.47\linewidth}
\centering
\includegraphics[width=4.5cm]{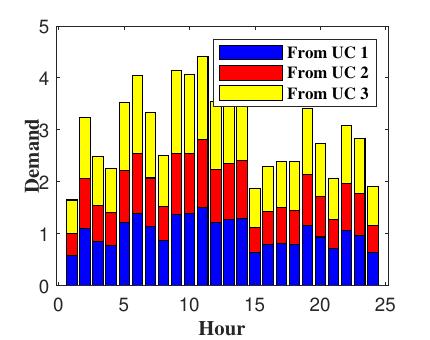}
%\caption{fig2}
\end{minipage}%
}\\ %
\subfigure[]{
\begin{minipage}[t]{0.47\linewidth}
\centering
\includegraphics[width=4.5cm]{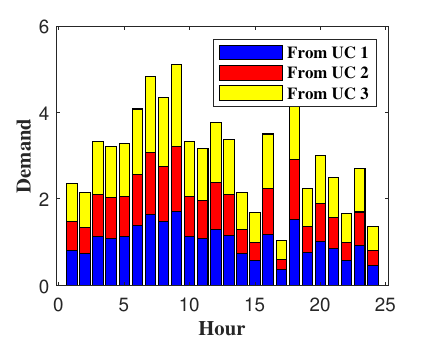}
%\caption{fig2}
\end{minipage}%
}%
{\caption{Optimal energy-purchasing strategies in different hours: (a) User 1; (b) User 2; (c) User 3; (d) User 4; (e) User 5.}\label{sm21}}
\end{figure}

{Second, we demonstrate and analyze the performance of Algorithms 3 and 4 by using the market information in the 7th hour (the physical constraints are inactivated in this hour). The obtained results and the explanations are summarized as follows.
\begin{enumerate}
  \item The simulation result with Algorithm 3 is depicted in Figs. \ref{sm3}-(a) to (d). As shown in Fig. \ref{sm3}-(a), the prices of UCs converge to steady states asymptotically. The price of UC 1 decreases most compared with that of normal UCs. However, the profit of UC 1 is stabilized at a relatively higher level compared with the other UCs. This is because the users would like to purchase more energy from UC 1, whose price is much lower. As seen from Figs. \ref{sm3}-(b) and (c), although the profit of UCs decreases, the profit of users increases significantly. Eventually, as shown in Fig. \ref{sm3}-(d), the value of PoA decreases from around 1.55 to around 1.20, which implies that the social profit of the market is improved.
  \item With the same parameter setting, the simulation result with {Algorithm 4} is shown in Figs. \ref{sm2}-(a) to (d). Fig. \ref{sm2}-(a) shows that the prices tend to steady states asymptotically. Figs. \ref{sm2}-(b) and \ref{sm2}-(c) show that the social profit of the market is improved compared with that obtained by Algorithm 3. As a consequence, the PoA is decreased to 1 as shown in Fig. \ref{sm2}-(d).
\end{enumerate}}

\begin{figure}[hpbt]
\centering
\subfigure[]{
\begin{minipage}[t]{0.47\linewidth}
\centering
\includegraphics[width=4.5cm]{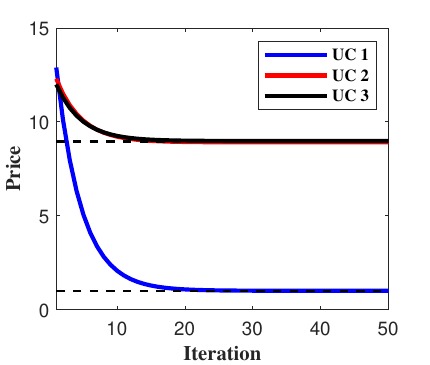}
%\caption{fig1}
\end{minipage}%
}%
\subfigure[]{
\begin{minipage}[t]{0.47\linewidth}
\centering
\includegraphics[width=4.5cm]{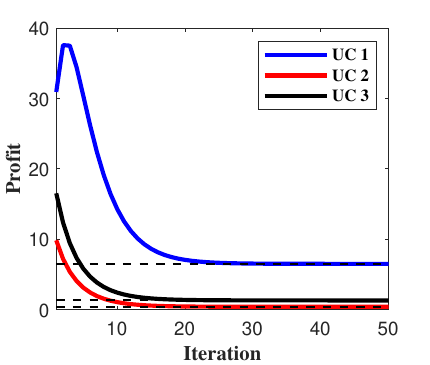}
%\caption{fig2}
\end{minipage}%
} \\ %
\subfigure[]{
\begin{minipage}[t]{0.47\linewidth}
\centering
\includegraphics[width=4.5cm]{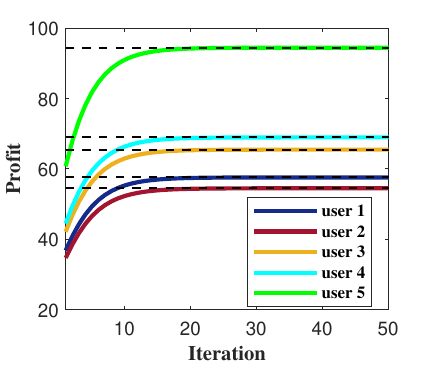}
%\caption{fig2}
\end{minipage}%
}%
\subfigure[]{
\begin{minipage}[t]{0.47\linewidth}
\centering
\includegraphics[width=4.5cm]{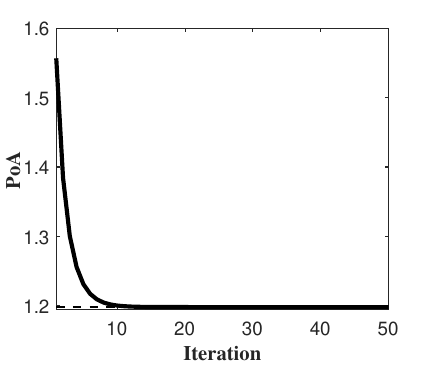}
%\caption{fig2}
\end{minipage}%
}%
{\caption{Optimization result in the 7th hour with Algorithm 3: (a) dynamics of the prices of UCs; (b) dynamics of the profits of UCs; (c) dynamics of the profits of users; (d) dynamics of PoA.}\label{sm3}}
\end{figure}

\begin{figure}[hpbt]
\centering
\subfigure[]{
\begin{minipage}[t]{0.47\linewidth}
\centering
\includegraphics[width=4.5cm]{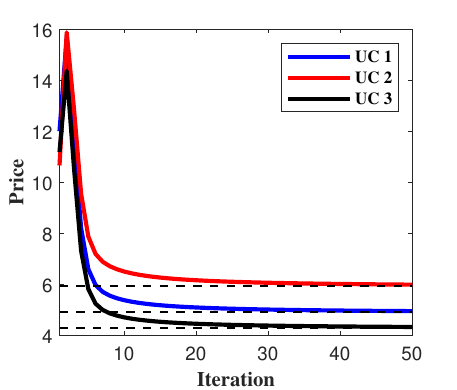}
%\caption{fig1}
\end{minipage}%
}%
\subfigure[]{
\begin{minipage}[t]{0.47\linewidth}
\centering
\includegraphics[width=4.5cm]{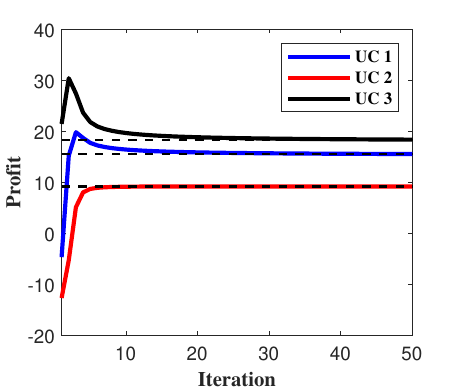}
%\caption{fig2}
\end{minipage}%
}  \\%
\subfigure[]{
\begin{minipage}[t]{0.47\linewidth}
\centering
\includegraphics[width=4.5cm]{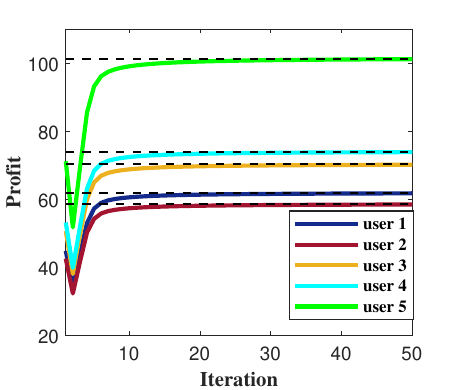}
%\caption{fig2}
\end{minipage}%
} %
\subfigure[]{
\begin{minipage}[t]{0.47\linewidth}
\centering
\includegraphics[width=4.5cm]{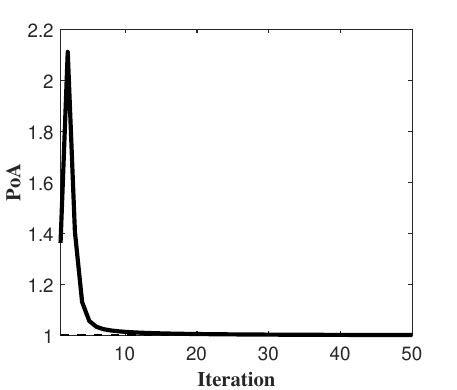}
%\caption{fig2}
\end{minipage}%
}%
{\caption{Optimization result in the 7th hour with Algorithm 4: (a) dynamics of the prices of UCs; (b) dynamics of the profits of UCs; (c) dynamics of the profits of users; (d) dynamics of PoA.}\label{sm2}}
\end{figure}

\section{Conclusions}

In this work, we proposed a multi-timescale leader-following approach for optimizing a multi-UC-multi-user electricity market based on NE and SE analysis. In this model, users aim to optimize their profits by purchasing energy from different UCs. At the UC side, two multi-timescale leader-following algorithms were proposed to optimize the social profit. By considering some additional constraints, two projection based updating algorithms were studied, which can provide approximate optimal solutions for the resulting possibly non-convex optimization problems.
\appendix{}

\subsection{Proof of Theorem 1}\label{conx1}

The dimension of variables of $U_{i}(\mathbf{d}_i)$ can be decreased by selecting a reference $d_{i,\hat{k}}$, which is the demand from a reference UC $\hat{k}$. Then, $d_{i,\hat{k}}= y_i-\sum_{k \in \mathcal{N} \setminus \{\hat{k}\}} d_{i,k}$. By canceling $d_{i,\hat{k}}$, the profit function (\ref{21s}) can be modified into
\begin{align}\label{s1}
  \widetilde{U}_{i} & (\mathbf{d}_{i,-\hat{k}})  = \sum_{l\in \mathcal{N}\setminus \{\hat{k}\}} (\alpha d_{i,l}-\frac{\beta}{2}d^2_{i,l}-p_k d_{i,l} )+\alpha d_{i,\hat{k}} \nonumber \\
& -\frac{\beta}{2}d^2_{i,\hat{k}}-p_{\hat{k}} d_{i,\hat{k}}  \nonumber \\
  = &-\beta d^2_{i,k}+(p_{\hat{k}}-p_k-\beta \sum_{l\in \mathcal{N}\setminus \{k,\hat{k}\}}d_{i,l}+\beta y_i )d_{i,k}\nonumber \\
&+\sum_{l\in \mathcal{N}\setminus \{k,\hat{k}\}} (\alpha d_{i,l}-\frac{\beta}{2}d^2_{i,l}-p_k d_{i,l} ) \nonumber \\
& +(\alpha-p_{\hat{k}})(y_i-\sum_{l\in \mathcal{N}\setminus \{k,\hat{k}\}}d_{i,l} ) \nonumber \\
& -\frac{\beta}{2} (y_i-\sum_{l\in \mathcal{N}\setminus \{k,\hat{k}\}}d_{i,l} )^2,
\end{align}
where $\mathbf{d}_{i,-\hat{k}}=(d_{i,1},...,d_{i,\hat{k}-1}, d_{i,\hat{k}+1},...,d_{i,N})^{\top}$ and $k \in \mathcal{N} \setminus \{\hat{k}\}$. Note that ${U}_i(\mathbf{d}_{i})$ is concave and twice continuously differentiable. Hence, the maximum of $\widetilde{U}_i(\mathbf{d}_{i,-\hat{k}})$ exists and is identical to that of {Problem (P1)} \cite[Sec. 10.1.2]{boyd2004convex}, which can be found by solving the first-order optimality condition
\begin{equation}\label{uir}
    \nabla \widetilde{U}_i(\mathbf{d}_{i,-\hat{k}}) =\mathbf{0},
    \end{equation}
which is in the form of
\begin{equation}\label{e1}
\mathbf{P}\mathbf{d}_{i,-\hat{k}}=\mathbf{b}_{i,-\hat{k}},
\end{equation}
where
\begin{subequations}
\begin{align}\label{}
&\mathbf{P} = \beta(\mathbf{I}_{N-1}+\mathbf{1}_{N-1}\mathbf{1}_{N-1}^{\top}), \\
%   &\mathbf{w}_{i,-\hat{k}}=\left[\begin{array}{ccccc}
%                       2\beta & \beta & \cdots  & \beta \\
%                       \beta & \ddots  & \ddots &   \vdots \\
%                       \vdots & \ddots  & \ddots  &   \beta \\
%                       \beta & \cdots & \beta & 2\beta
%                     \end{array}
%    \right] \in \mathbb{R}^{(N-1)\times(N-1)},\nonumber\\
&\mathbf{b}_{i,-\hat{k}}= (p_{\hat{k}}-p_{1}+\beta y_i,...,p_{\hat{k}}-p_{\hat{k}-1}+\beta y_i,\nonumber\\
& \quad \quad \quad  p_{\hat{k}} -p_{\hat{k}+1} +\beta y_i,...,p_{\hat{k}}-p_{N}+\beta y_i)^{\top}.
\end{align}
\end{subequations}
Since $\mathbf{P}$ is invertible, the solution to (\ref{e1}) can be solved by $\mathbf{d}^*_{i,-\hat{k}}=\mathbf{P}^{-1}\mathbf{b}_{i,-\hat{k}}$, which gives
\begin{equation}\label{con99}
d^*_{i,k}=\frac{1}{\beta} (\frac{1}{N}\sum_{j \in \mathcal{N}}p_j-p_k )+\frac{y_{i}}{N},\quad \forall k \in \mathcal{N} \setminus \{\hat{k}\}.
\end{equation}
Then, the purchased energy from UC $\hat{k}$ is $d^*_{i,\hat{k}} = y_i-\sum_{k \in \mathcal{N}\setminus \{ \hat{k}\}} d^*_{i,k}$. It can be checked that the expression of $d^*_{i,\hat{k}}$ also follows the structure of (\ref{con99}) except substituting subscript $k$ by $\hat{k}$. Hence, the solution to {Problem (P1)} can be written by a uniform formula
\begin{equation}\label{con10}
d^*_{i,k}=\frac{1}{\beta} (\frac{1}{N}\sum_{j \in \mathcal{N}}p_j-p_k )+\frac{y_{i}}{N}, \quad \forall k \in \mathcal{N}.
\end{equation}

\subsection{Proof of Corollary 2}\label{c3p}

Rewriting (\ref{con1}) with Nash price $\mathbf{p}^{\mathrm{NE}}$ gives
\begin{align}\label{40}
\mathbf{d}^*_i = \frac{1}{\beta N} \mathbf{B} \mathbf{p}^{\mathrm{NE}} + \frac{\mathbf{y}_i}{N}.
\end{align}
Then by (\ref{16}), the rationality condition $\mathbf{d}^*_{i} \geq \mathbf{0}$ can be obtained as (\ref{14}).

\subsection{Proof of Lemma \ref{ov}}\label{ov1}
The Hessian matrix of $S(\mathbf{p})$ can be written as
\begin{equation}\label{}
    \mathbf{H}_S=[\nabla_{kk'} S]_{k,k' \in \mathcal{N}}.
\end{equation}
By (\ref{u1}) and (\ref{e2}), $\forall j,k \in \mathcal{N}, j  \neq k$, we can have
\begin{align}\label{}
 \nabla_{kk} \sum_{l \in \mathcal{M}}U^*_l = & \frac{M(N-1)}{\beta N},  \\
\nabla_{kj} \sum_{l \in \mathcal{M}}U^*_l = & -\frac{M}{\beta N},  \\
\nabla_{kk} \sum_{{l'} \in \mathcal{N}}W_{l'}= & -\frac{2a_kM^2(N-1)^2}{N^2\beta^2}-\frac{2M(N-1)}{N\beta} \nonumber \\
& -\sum_{h \in \mathcal{N}\setminus \{k\}}\frac{2a_hM^2}{N^2\beta^2}, \\
\nabla_{kj} \sum_{{l'} \in \mathcal{N}}W_{l'} = & \frac{2(a_k+a_j)M^2(N-1)}{N^2\beta^2}  + \frac{2M}{N\beta}  \nonumber \\
& - \sum_{h \in \mathcal{N}\setminus \{j,k\}}\frac{2a_hM^2}{N^2\beta^2}.
\end{align}
Since $S = \sum_{l \in \mathcal{M}}U^*_l + \sum_{{l'} \in \mathcal{N}} W_{l'}$, we have
\begin{align}
\nabla_{kk} S = & -\frac{2a_kM^2(N-1)^2}{N^2\beta^2}-\frac{M(N-1)}{N\beta} \nonumber \\
&  -\sum_{h \in \mathcal{N} \setminus\{k\}}\frac{2a_h M^2}{N^2\beta^2},   \\
  \nabla_{kj} S = & \frac{2(a_k+a_j)M^2(N-1)}{N^2\beta^2}+\frac{M}{N\beta}  -\sum_{h \in \mathcal{N} \setminus\{j,k\}}\frac{2a_h M^2}{N^2\beta^2}.
\end{align}
With {Assumption 1}, it can be derived that
\begin{align}\label{sssd}
        | \nabla_{kk} S |- \sum_{j \in \mathcal{N}\setminus\{k\}} & |\nabla_{kj} S | = |\nabla_{kk} S|- \sum_{j \in \mathcal{N}\setminus\{k\}}  \nabla_{kj} S=0.
\end{align}
Hence, $\mathbf{H}_S$ is symmetric and negative semi-definite, which means $S(\mathbf{p})$ is concave.

\subsection{Proof of Lemma \ref{52}}\label{521}

Note that the expression of $\widetilde{S}(w_1)$ can be derived from quadratic function $S(\mathbf{w})$ by substituting all $w_k$ by $f_k(w_1)$ with (\ref{41}), $k \in \widetilde{\mathcal{N}}$. Since $f_k(w_1)$ is an affine function of $w_1$, then $\widetilde{S}(w_1)$ is quadratic and concave at $w_1$ and the optimal solution $w^*_1$ to {Problem (P4)} can be obtained by solving the following first-order optimality condition
\begin{align}\label{33}
  \nabla \widetilde{S}(w_1) = &  \nabla_1 {S}(\mathbf{w}) + \sum_{k \in \widetilde{\mathcal{N}}} (\nabla_k {S}(\mathbf{w}) \nabla f_k(w_1) )  \nonumber \\
  =  & \nabla_1 {S}(\mathbf{w}) + \sum_{k \in \widetilde{\mathcal{N}}} (\nabla_k {S}(\mathbf{w}) \frac{\mu_k}{\mu_1} )  \nonumber \\
  = & \nabla_1 {S}(\mathbf{w}) + \sum_{k \in \widetilde{\mathcal{N}}} (\nabla_k {S}(\mathbf{w}) \theta_k)   \nonumber \\
   =  & (\nabla {S}(\mathbf{w}))^{\top} \bm{\eta} = 0,
\end{align}
where the first equality complies with the chain rule of composite differentiation, and the second and third equalities are from the definition of $\widetilde{\bm{\mu}}$ in (\ref{fi-1}) and (\ref{fi}). On the other hand, $w_1^*$ is a steady state of (\ref{12ss}) means that there exists certain $\mathbf{w}^*$ such that $\xi(\mathbf{w}^*) = 0$. By recalling (\ref{xxb}), since $\mu_1$ is nonzero, the solution to $\xi(\mathbf{w}) = 0$ is identical to that to (\ref{33}). This completes the proof.

\subsection{Proof of Theorem 2}\label{2131}

Inspired by \cite{x7}, we define a fixed point contraction mapping $\mathbf{\Gamma} (\tau) = (\Gamma_1 (\tau),...,\Gamma_N (\tau))^{\top}:[0,1] \rightarrow \mathbb{R}^N $, where
\begin{equation}\label{xc}
    \mathbf{\Gamma} (\tau)=\tau \mathbf{w}^t + (1-\tau) \mathbf{w}^* + \bm{\eta}\xi (\tau \mathbf{w}^t + (1-\tau) {\mathbf{w}^*}),
\end{equation}
with $\mathbf{w}^*=(w_1^*,(\widetilde{\mathbf{w}}^*)^{\top})^{\top}$. Define $\theta_1=1$, then (\ref{xc}) is equivalent to
\begin{equation}\label{xc1}
    \Gamma_l (\tau)=\tau w^t_l + (1-\tau) w^*_l + \theta_l \xi (\tau \mathbf{w}^t + (1-\tau) {\mathbf{w}^*}), l \in \mathcal{N}.
\end{equation}
Then
\begin{align}\label{edr}
 | \theta_l|  | w^{t+1}_1 -w^*_1 | & = | w^{t+1}_l  -w^*_l |   =  \left| \Gamma_l (1)-\Gamma_l (0) \right| \nonumber \\
 &   =   \left| \int^1_0 \frac{\mathrm{d} \Gamma_l (\tau)}{\mathrm{d} \tau}\mathrm{d}\tau \right|
\leq  \int^1_0 \left|\frac{\mathrm{d}\Gamma_l (\tau)}{\mathrm{d} \tau} \right| \mathrm{d} \tau       \nonumber \\
&  \leq \max\limits_{\tau \in [0,1]} \left|\frac{\mathrm{d}\Gamma_l (\tau)}{\mathrm{d} \tau} \right|,
\end{align}
where we use (\ref{tr1}) and $\xi(\mathbf{w}^*) = 0$. Then, we have
\begin{align}\label{fa1}
       \left| \frac{\mathrm{d} \Gamma_l (\tau)}{\mathrm{d} \tau} \right| = & | (w^t_l-w_l^*) + \mu_1 \theta_l \sum_{l' \in \mathcal{N}} ( \nabla_{1l'} {S}(\mathbf{v}^t(\tau))  \nonumber \\
       &  + \sum_{k' \in \widetilde{\mathcal{N}}} \theta_{k'} \nabla_{k'l'} {S}(\mathbf{v}^t(\tau)) ) (w^t_{l'}-w_{l'}^*) |  \nonumber \\
       = & | (w^t_l-w_l^*) + \mu_1 \theta_l \sum_{l' \in \mathcal{N}} ( \nabla_{1l'} {S} \nonumber \\
       & + \sum_{k' \in \widetilde{\mathcal{N}}} \theta_{k'} \nabla_{k'l'} {S} ) (w^t_{l'}-w_{l'}^*) |  \nonumber \\
       = & |\theta_l| | 1+ \mu_1 \sum_{l' \in \mathcal{N}} \theta_{l'} ( \nabla_{1l'} {S} \nonumber \\
& + \sum_{k' \in \widetilde{\mathcal{N}}} \theta_{k'} \nabla_{k'l'} {S} )  |  |w_1^t - w^*_1| \nonumber \\
       = & |\theta_l || 1 + \mu_1  \bm{\eta}^{\top} \mathbf{H}_S \bm{\eta} |  |w_1^t - w^*_1| \nonumber \\
       = & \Lambda |\theta_l|  |w_1^t - w^*_1| ,
\end{align}
where $\mathbf{v}^t (\tau)= \tau \mathbf{w}^t + (1-\tau) {\mathbf{w}^*}$, $\tau \in [0,1]$, $l \in \mathcal{N}$. The second equality in (\ref{fa1}) holds since the second-order (cross) derivatives of $S$ are constant (see the proof of {Lemma \ref{ov}}). The third equality holds with (\ref{tr1}).

By (\ref{edr}), (\ref{fa1}) and $\bm{\eta}^{\top} \mathbf{H}_S \bm{\eta}\neq 0$, a sufficient condition of the convergence of $w_l$ is
$| w_l^{t+1}-w_l^*| < | w_l^{t}-w_l^*|$, which means $| 1 + \mu_1  \bm{\eta}^{\top} \mathbf{H}_S \bm{\eta} | < 1$. Hence, $\mu_1$ can be chosen in the range determined by
\begin{align}\label{dde}
\mu_1  \bm{\eta}^{\top} \mathbf{H}_S \bm{\eta} \in (-2,0).
\end{align}
Based on (\ref{edr})-(\ref{dde}), a linear convergence rate of $\mathbf{w}^t$ can be obtained, since
\begin{align}\label{}
 |w^t_1  - w^*_1 | & < \Lambda^t |w^0_1 - w^*_1 |, \label{sa1} \\
 |w^t_k  - w^*_k | & = |\theta_k| |w^t_1 - w^*_1 | < \Lambda^t |\theta_k| |w^0_1 - w^*_1 | \nonumber \\
& = \Lambda^t  |w^0_k - w^*_k |, \quad k \in \widetilde{\mathcal{N}}. \label{sa2}
\end{align}
This completes the proof.

\subsection{Proof of Lemma \ref{p1}}\label{pp1}

Note that
\begin{align}\label{rrp}
\bar{\xi}(\mathbf{w}^t) = & (\nabla S^t)^{\top} \bar{\bm{\eta}}^t \mu^t_1 = \nabla_1 S^t \mu^t_1+ \frac{\sum_{k \in \widetilde{\mathcal{N}}}(\nabla_k S^t)^2 \mu^t_1}{ \nabla_1 S^t }.
\end{align}
{{(Prove by Contradiction)}} {{(i)}} Under the precondition $\bar{\xi}(\mathbf{w}^t) \rightarrow 0$, seen from (\ref{rrp}), if $\exists \kappa_1 ,\mu^t_1>0$ such that $\mid \nabla_1 S^t \mid \geq \kappa_1$, then
\begin{align}\label{}
| \bar{\xi}(\mathbf{w}^t) | = & \mid \nabla_1 S^t \mid \mu^t_1 + \frac{\sum_{k \in \widetilde{\mathcal{N}}}(\nabla_k S^t)^2 \mu^t_1}{ \mid \nabla_1 S^t \mid} \nonumber \\
\geq & \kappa_1 \mu^t_1 + \frac{\sum_{k \in \widetilde{\mathcal{N}}}(\nabla_k S^t)^2 \mu^t_1}{ \mid \nabla_1 S^t \mid} \nonumber \\
\geq & \kappa_1 \mu^t_1,
\end{align}
 which contradicts $\bar{\xi}(\mathbf{w}^t) \rightarrow 0$. Therefore, $\nabla_1 S^t \rightarrow 0 $.

 {{(ii)}} If $\exists \kappa_2,\mu^t_1 >0$ such that $\mid \nabla_{k'} S^t \mid \geq \kappa_2$ for certain $k' \in \widetilde{\mathcal{N}}$, then
 \begin{align}\label{}
|\bar{\xi}(\mathbf{w}^t)| = & | \nabla_1 S^t | \mu^t_1 + \frac{\sum_{k \in \widetilde{\mathcal{N}}}( \nabla_k S^t )^2 \mu^t_1}{| \nabla_1 S^t|} \nonumber \\
\geq & | \nabla_1 S^t | \mu^t_1+ \frac{\kappa^2_2 \mu^t_1}{| \nabla_1 S^t| } \nonumber \\
\geq & \frac{\kappa^2_2 \mu^t_1}{| \nabla_1 S^t |} \rightarrow +\infty
 \end{align}
with $\nabla_1 S^t \rightarrow 0 $ (by {{(i)}}), which contradicts $\bar{\xi}(\mathbf{w}^t) \rightarrow 0$. Therefore, $\nabla_{k'} S^t  \rightarrow 0$, $\forall k' \in \widetilde{\mathcal{N}} $.

By combining {{(i)}} and {{(ii)}}, the proof is completed.

\subsection{Proof of Theorem 3}\label{21331}

By (\ref{xcx+2}), we can have
\begin{align}\label{yy}
\mathbf{w}^{t+1}  = & \mathbf{w}^{t} + \bm{\bar{\eta}}^t \bar{\xi}(\mathbf{w}^{t})
 \nonumber \\
  = & \mathbf{w}^{t} +  \mu^t_1 \bm{\bar{\eta}}^t (\nabla S^t)^{\top} \bar{\bm{\eta}}^t \nonumber \\
 = & \mathbf{w}^{t} + \mu_1^t (1,\frac{\nabla_2 S^t}{\nabla_1 S^t},\cdots,\frac{\nabla_N S^t}{\nabla_1 S^t})^{\top} \nonumber \\
 & \cdot ( \nabla_1 S^t + \frac{(\nabla_2 S^t)^2}{\nabla_1 S^t} + \cdots + \frac{(\nabla_N S^t)^2}{\nabla_1 S^t}  ) \nonumber  \\
 = & \mathbf{w}^{t} + \mu_1^t ( 1+ (\frac{\nabla_2 S^t}{\nabla_1 S^t})^2+ \cdots + (\frac{\nabla_N S^t}{\nabla_1 S^t})^2  ) \nabla S^t \nonumber  \\
 = & \mathbf{w}^{t} +  \mu_1^t \parallel \bar{\bm{\eta}}^t \parallel^2 \nabla S^t.
\end{align}
By {Lemma \ref{ov}}, $S$ is concave. To guarantee the convergence of (\ref{yy}), we choose a positive $\mu_1^t$ such that $\mu_1^t \parallel \bar{\bm{\eta}}^t \parallel^2 = 1/L$. Then, by \cite[Thm. 3.3]{bubeck2015convex}, (\ref{yy}) is convergent. Note that the convergence of (\ref{yy}) implies that $\nabla S^t \rightarrow \mathbf{0}$, which is the first-order optimality condition of {Problem (P3)}. Hence, {Theorem \ref{2133}} is proved.

\bibliographystyle{IEEEtran}
\bibliography{myref}

\end{document}